\PassOptionsToPackage{table}{xcolor}

\documentclass[acmtog,nonacm]{acmart}
\AtBeginDocument{%
  }

\newtheorem{proposition}{Proposition}

\newtheorem{corollary}{Corollary}

\newtheorem{definition}{Definition}

\renewcommand{\sectionautorefname}{Sec.}
\renewcommand{\subsectionautorefname}{Sec.}
\renewcommand{\subsubsectionautorefname}{Sec.}

\newcommand{\changed}[2]{{\label{#2}#1}} %

\newcommand{\surface}{\Omega}
\newcommand{\point}{\mathbf{p}}
\newcommand{\otherpoint}{\mathbf{q}}
\newcommand{\smoothpt}{\mathbf{x}}
\newcommand{\sdf}{\phi}
\newcommand{\sdfsample}{s}
\newcommand{\growtopoints}{P}
\newcommand{\qmin}{\mathbf{q}^\star}
\newcommand{\rmin}{s^\star}
\newcommand{\diagonal}{\delta}
\newcommand{\subdivisiondepth}{\tau}
\DeclareMathOperator{\distance}{dist}
\DeclareMathOperator{\sign}{sign}

\definecolor{solidblue}{HTML}{447ebc}
\definecolor{solidred}{HTML}{ed551f}
\definecolor{lightblue}{HTML}{abcce1}
\definecolor{lightred}{HTML}{f59275}

\usepackage{layouts}
\makeatletter 
\newcommand{\layoutdetails}{%
\begin{tabular}{ll}
 \texttt{\textbackslash{textwidth}} & \printinunitsof{in}\prntlen{\textwidth} \\
\texttt{\textbackslash{linewidth}} & \printinunitsof{in}\prntlen{\linewidth} \\
Main text font &  \f@size pt \f@family \\
\sffamily \small Caption text font &  \sffamily \small \f@size pt \f@family \\
\end{tabular}%
}
\makeatother 

\usepackage{placeins}
\usepackage{wrapfig}
\usepackage{listings}
\makeatletter
\let\ftype@table\ftype@figure
\makeatother

\acmSubmissionID{}

\begin{document}

\title{\emph{Greed for the Spheres}: A Signed Distance Interpolation Method}

\author{Letao Chen}
\affiliation{%
  \institution{University of Southern California}
  \country{USA}
}
\author{Sanju Mupparaju}
\affiliation{%
  \institution{Massachusetts Institute of Technology}
  \country{USA}
}
\author{Christopher Batty}
\affiliation{%
  \institution{University of Waterloo}
  \country{Canada}
}
\author{Silvia Sell\'an}
\affiliation{%
  \institution{Columbia University}
  \country{USA}
}
\affiliation{%
  \institution{Massachusetts Institute of Technology}
  \country{USA}
}
\authornote{Joint last authors}
\author{Oded Stein}
\affiliation{
  \institution{Technion}
  \country{Israel}
}
\affiliation{
  \institution{University of Southern California}
  \country{USA}
}
\authornotemark[1]
\renewcommand{\shortauthors}{Chen et al.}

\begin{abstract}
We propose a method to interpolate Signed Distance Function (SDF) data from a discrete set of samples.
Unlike prior work, our approach ensures that the new SDF data values are fully consistent with the input and each other\changed{, such that the augmented data still corresponds to a geometrically realizable surface}{}.
We express the theoretical properties of SDFs as hard geometric constraints,
\changed{and construct an efficient greedy algorithm for consistent SDF interpolation that is made even faster with powerful parallelized GPU preprocessing}{explaining-parallel-1}.
\changed{We exemplify the usefulness of our method by evaluating it on
three practical applications:
global SDF refinement, in which the SDF data is upsampled without knowledge of the ground truth; 
mesh reconstruction, where our method can reconstruct
highly detailed surfaces using global information from coarse input SDFs;
and repair of pseudo-SDFs, which result from many pipelines such as CSG Boolean operations and must be turned into valid SDFs for downstream processing tasks.
Our refined SDFs are guaranteed to be consistent with the input, where previous methods have no such guarantee.}{}
\end{abstract}

\begin{CCSXML}
<ccs2012>
   <concept>
       <concept_id>10010147.10010371</concept_id>
       <concept_desc>Computing methodologies~Computer graphics</concept_desc>
       <concept_significance>500</concept_significance>
       </concept>
   <concept>
       <concept_id>10010147.10010371.10010396.10010397</concept_id>
       <concept_desc>Computing methodologies~Mesh models</concept_desc>
       <concept_significance>500</concept_significance>
       </concept>
   <concept>
       <concept_id>10010147.10010371.10010396.10010400</concept_id>
       <concept_desc>Computing methodologies~Point-based models</concept_desc>
       <concept_significance>500</concept_significance>
       </concept>
   <concept>
       <concept_id>10010147.10010371.10010396.10010401</concept_id>
       <concept_desc>Computing methodologies~Volumetric models</concept_desc>
       <concept_significance>500</concept_significance>
       </concept>
 </ccs2012>
\end{CCSXML}

\ccsdesc[500]{Computing methodologies~Computer graphics}
\ccsdesc[500]{Computing methodologies~Mesh models}
\ccsdesc[500]{Computing methodologies~Point-based models}
\ccsdesc[500]{Computing methodologies~Volumetric models}

\begin{teaserfigure}
  \includegraphics[width=\textwidth]{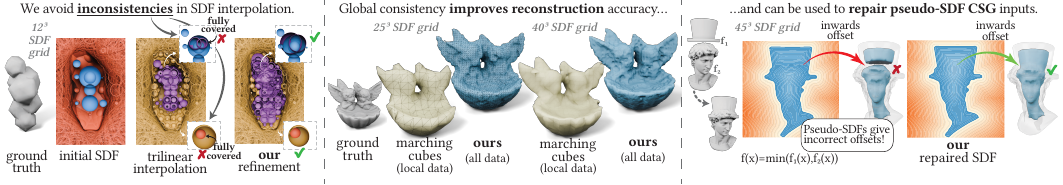}
  \vspace{-0.7cm}
  \caption{\changed{We interpolate signed distance to arbitrary spatial positions \emph{guaranteeing} that the newly produced values are consistent with the discrete input SDF.
  This can be used for refining a coarse SDF \emph{(left, visualized as SDF spheres)}, where previous methods produce invalid fully-covered spheres in their refined SDF;
  we can use it to reconstruct a mesh from an SDF exploiting \emph{all} input distance data (\emph{center}), where some previous methods use only local data;
  and we can repair pseudo-SDFs from CSG inputs (\emph{right}) that otherwise lead to downstream errors, such as with CSG operations.}{new-teaser}}
  \label{fig:teaser}
\end{teaserfigure}

\maketitle

\pagenumbering{arabic}
\setcounter{page}{1}

\begin{figure}[b]
\vspace{-0.2cm}
\centering
\includegraphics[width=3.37in]{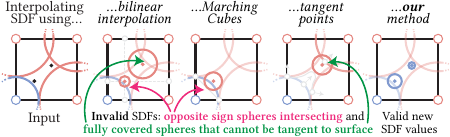}
\vspace{-0.7cm}
\caption{Inconsistencies between distance values are most recognizable when SDFs are visualized using spheres \cite{rfts}. Prior interpolation strategies produce new SDF values that contradict the inputs, in the form of intersecting opposite-sign spheres (violating eikonality) and fully covered spheres (violating the \emph{closest point property} \cite{Marschner2023}).}
\label{fig:fig2}
\end{figure}

\section{Introduction}
\label{sec:intro}

\emph{Signed Distance Functions} (SDFs) represent solid shapes by measuring the distance from any point in space to the solid's surface and assigning a negative sign to those points inside the solid.
The mathematical \changed{properties}{not-regularity} and flexibility of SDFs have made them the representation of choice in applications from industrial design to 3D generative AI and many others.
These applications often produce only a discrete set of signed distance samples, while many tools in the geometric processing pipeline rely on being able to query the SDF at arbitrary spatial positions.
\changed{When one does not have access to the Signed Distance Function beyond the original samples,
\emph{deducing} the SDF values at new spatial points}{no-access-to-sdf-1} becomes a critically important research question. We refer to this task as \emph{SDF interpolation}.

Unfortunately, generic polynomial interpolation methods produce outputs that violate the most basic mathematical properties of SDFs, causing downstream SDF processing algorithms to fail (e.g., eikonality ensures that \changed{sphere tracing}{sphere-tracing} algorithms do not overshoot).
Even brute force \emph{redistancing} strategies, which inefficiently reconstruct an explicit mesh from the input SDF samples and then compute distances to it, will produce values inconsistent with the input (see \autoref{fig:fig2}).
The main challenge lies in the non-convex and highly global nature of the problem: any new interpolated SDF value must be consistent not only with the existing samples closest to it, but also with those arbitrarily far from it.

In this paper, we \changed{introduce this challenge to the geometry processing and computer graphics communities}{new-problem} and propose Greed for the Spheres, the first algorithm to interpolate SDF data while ensuring consistency with all SDF properties.
We state these constraints mathematically and simplify them significantly to make them computationally tractable.
\changed{We introduce an efficient greedy algorithm that can interpolate a coarse discrete SDF while ensuring that every SDF value generated is consistent with both the input and all previously generated values.
This algorithm is further sped up by parallelized GPU preprocessing and parallelized CPU postprocessing.}{explaining-parallel-2}

We investigate several applications of our interpolation technique (see \autoref{fig:teaser}): global SDF refinement (i.e., upsampling), SDF mesh reconstruction, and pseudo-SDF repair.
For refinement, we use a hierarchical data structure to produce new SDF values on a narrow band around the SDF's zero level set.
By ensuring consistency with the entire SDF input, this effectively embeds global information into new, local SDF values.
For reconstruction, our refined SDF can simply be given as input to traditional isosurface extraction methods like Marching Cubes \cite{Lorensen1987,wyvill1986data}, yielding similar quality reconstructions to the state of the art \cite{rfta,kohlbrenner2025isosurface,kohlbrenner2025polyhedral},
\changed{but with added theoretical guarantees}{}.
Lastly, with minimal changes, our method can be used to repair \emph{pseudo-SDFs} \cite{Marschner2023}, converting them into \changed{valid discrete}{} SDFs, and enabling their use in downstream geometric processing applications \changed{(our pseudo-SDF repair method is especially efficient because it is particularly amenable to parallelization, and thus significantly improves over the state of the art)}{explaining-parallel-3}. 
We validate our algorithmic and parametric choices on a large set of 2D and 3D examples. 
By comparing our method to other conceivable alternatives for SDF refinement, reconstruction, and repair, we confirm experimentally our work's uniqueness in \emph{exactly} preserving all SDF properties
\changed{and ensuring that the modified SDF correctly corresponds to a consistent and feasible continuous surface.}{more-careful-intro}

\begin{figure}
\centering
\includegraphics[width=3.365in]{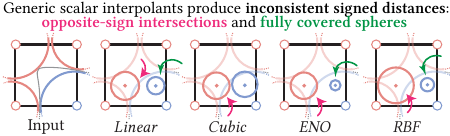}
\caption{Like linear interpolation, typical higher order polynomial interpolation strategies also regularly yield contradictory SDF values, with intersecting opposite-sign spheres or fully covered spheres.}
\label{fig:scalar-interpolants}
\end{figure}

\section{Related Work}
\label{sec:related}

To the best of our knowledge, no prior work considers the specific question of interpolating signed distance data while precisely preserving its global consistency and mathematical properties.
To motivate our approach, we examine three possible ways one might repurpose existing methods to produce similar interpolated outputs, and show why each is ultimately ill-suited for this task.

\subsection{Scalar interpolation methods}

Without considering the specifics of SDFs, interpolating general scalar field data from three-dimensional samples is a well-studied research question relevant in fields from numerical analysis to engineering, physics, and applied mathematics.
As shown in \autoref{fig:scalar-interpolants} \changed{(as well as by \citet{vadcubic}, for the specific case of cubic interpolants, and by \citet{baboud2011precomputed}, for bilinear interpolants)}{cite-baboud2011precomputed}, these general-purpose methods fail to preserve the mathematical properties of SDFs, and are therefore not directly applicable to our problem setting.
While a thorough review of this area is beyond the scope of this paper, common strategies include trilinear and tricubic interpolation from grids, the ENO \cite{harten1987uniformly} and weighted ENO \cite{liu1994weighted} schemes for numerical PDEs, and techniques like Radial Basis Functions (RBF) \cite{buhmann2000radial} and Moving Least Squares \cite{levin1998approximation} for unstructured data inputs.
These general-purpose polynomial schemes have been used to query \emph{approximate} SDFs from cached data (e.g., for collision detection \cite{koschier2017hp}), without guarantees of global consistency.

Beyond SDFs, a wide variety of scalar field interpolants have been proposed for a myriad of different graphics tasks.
For example, scalar fields are discretized using scattered particles and radial kernel interpolations in applications from implicit surface reconstruction \cite{kazhdan2005reconstruction,Kazhdan2006} to skinning \cite{dodik2024robust} to fluid simulation \cite{koschier2022survey}.
Significant work has also been dedicated to designing more accurate interpolation schemes for hybrid particle-grid simulations, such as the Material Point Method \cite{jiang2015affine,gao2017gimp}.
More spiritually related to our work are methods from the grid-based fluid simulation community in which grid velocities are interpolated in a manner that yields continuous pointwise divergence-free flows \cite{chang2021curl,schroeder2022local}.
In an analogous way, we set out to design an interpolant that satisfies the mathematical properties of SDFs; however, whereas the incompressible velocity interpolation of Schroeder et al.\ is strictly local, we will need to ensure \emph{global} consistency of the interpolated SDF values, introducing a new set of mathematical and computational challenges.

\begin{figure}
\centering
\includegraphics{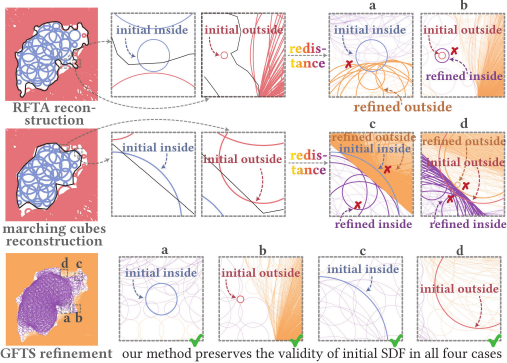}
\vspace{-0.7cm}
\caption{
Using
reconstruction methods for SDF refinement by measuring the distance to the reconstructed surface on a refined grid \emph{(redistancing)} can violate the initial SDF: opposite-sign output spheres intersect.
Our method produces a refined SDF without reconstructing first and guarantees a valid SDF that is perfectly consistent with the input.}
\label{fig:redistancing}
\end{figure}

\begin{figure*}
    \centering
    \includegraphics{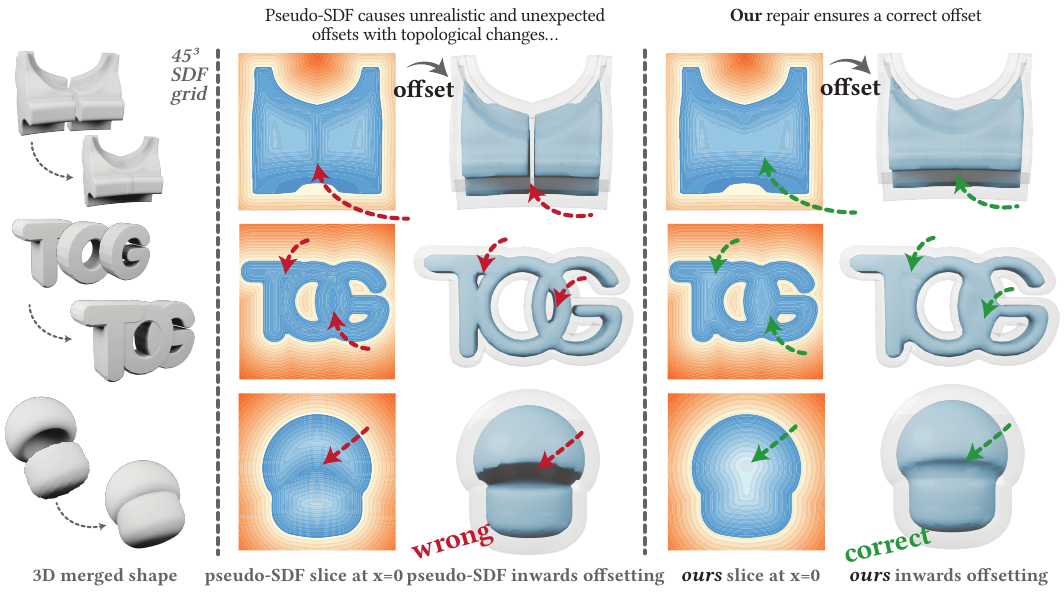}
    \caption{\changed{Computing the \(\omega\)-offset of a pseudo-SDF by meshing the \(\omega=0.15\) level set gives incorrect results. Only by repairing the pseudo-SDF with our method can we use the SDF to correctly compute offset surfaces.}{new-experiment-added}}
    \label{fig:offset-3d}
\end{figure*}

\begin{figure*}
\centering
\includegraphics{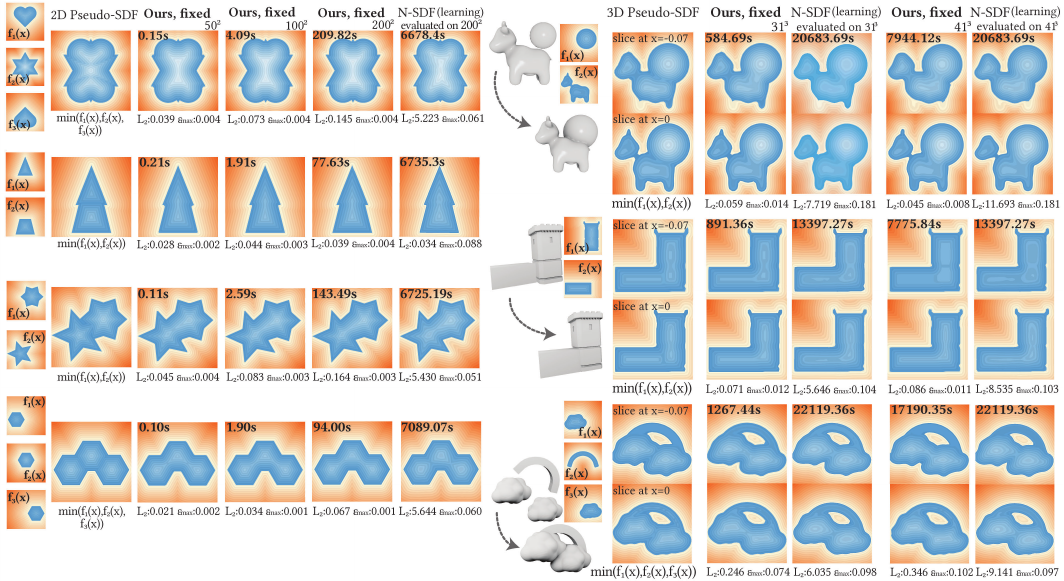}
\vspace{-0.7cm}
\caption{\changed{Given a pseudo-SDF composed of multiple simple shapes, our method can repair it at different resolutions while using significantly less time than the neural method of \citet{Marschner2023} with lower maximum error \(\epsilon_{max}\) and \(L_2\) error.
Runtime for \citet{Marschner2023} is the same for all resolutions, as we evaluate the same model (trained with default parameters) on different grids.}{new-experiments-added}}
\label{fig:marschner}
\end{figure*}

\subsection{Reconstruction and redistancing}
While we have found no prior work dedicated to consistency-preserving interpolation from a discrete set of SDF values, extracting explicit meshes from such a set has been an active area of research for decades. Thus, a plausible interpolation procedure is to reconstruct an explicit surface from the input samples and directly compute the signed distance to that surface.
We call this strategy \emph{redistancing}.

Unfortunately, as noted recently by \citet{rfts}, classical SDF reconstruction methods like Marching Cubes \cite{Lorensen1987} and Dual Contouring \cite{Ju2002} produce reconstructions without accounting for the global information provided by the SDF samples, leading to contradictions between input and redistanced data (see \autoref{fig:redistancing}, middle).
Indeed, even the work of \citet{rfts}, which identifies the three properties that a surface must satisfy to be globally consistent with an SDF, only imposes these as soft constraints, and can lead to the same type of contradictions;
the same holds for the work of \citet{rfta} (see \autoref{fig:redistancing}, top), as well as the recent method by \citet{kohlbrenner2025isosurface,kohlbrenner2025polyhedral}, which guarantee satisfaction of two of the constraints at the expense of the third.

We draw inspiration and technical insights from these recent methods, and propose explicit mesh reconstruction as a potential application of our algorithm.
However, our work is fundamentally scoped to answer a different, novel research question: namely, that of \emph{consistently and accurately interpolating SDF data}. 
\changed{Importantly, we focus on the setting in which one cannot (due to computational cost or missing information) query the SDF at any additional locations beyond the original samples, and thus cannot benefit from the state-of-the-art optimal resampling strategy proposed by \citet{kohlbrenner2025isosurface}.}{no-access-to-sdf-3}

Note that the level set method literature also considers redistancing; in that setting, new \emph{approximate} SDF values are computed by marching the distance field outward from the zero level set \cite{sethian1996fast} or numerically solving the eikonal equation \cite{sussman1999efficient,cheng2008redistancing}; consistency is not guaranteed. \changed{Similar Euclidean Distance Transform (EDT) methods \emph{can} guarantee an exact distance field, but require that the seed (zero) points coincide with grid points \cite{felzenszwalb2012distance}.}{}

\changed{Finally, a large amount of prior work has been dedicated to producing conservative distance bounds (e.g., \cite{sharp2022spelunking}) for tasks like CSG tree simplification \cite{barbier2025lipschitz,hubert2025accelerating} and shape approximation \cite{schott2025sphere}. Closer to our work in spirit, \citet{baboud2011precomputed} store precomputed ``free-space'' regions to improve heightfield rendering efficiency, and note how bilinear interpolation can lead to inaccuracies.}{missing-references-1}

\subsection{Neural Signed Distance Functions}
Recently, SDFs have emerged as one of the representations of choice in deep learning applications like shape compression \cite{davies2020effectiveness}, completion \cite{Park2019}, real-time rendering \cite{takikawa2021nglod} and even geometric reconstruction from images \cite{wang2021neus} and cross-sections \cite{walker2025crosssdf}.
More relevant to our research question is a significant body of work that has proposed using neural networks as parametric function spaces. Gradient-based optimization is performed on the network's parameters to best fit some observed data (e.g., point clouds or occupancy data) while minimizing some differential regularization losses \cite{xi2025neuralssd,coiffier20241,bethune2023robust,wang2023neural,ben2022digs,lipman2021phase,sitzmann2020implicit,atzmon2020sal,atzmon2020sald,fayolle2021signed}.
Recent work also explores reconstructing surfaces directly from neural SDFs:
\citet{liu2025direct} perform feature-preserving isocontouring using a Voronoi-based approach that assumes access to a continuous SDF, and \citet{stippel2025marching} analytically extract an explicit mesh by exploiting the internal architecture of a given neural SDF.
One could imagine repurposing these methods to be supervised on observed SDF data, employing sampling of the trained network as an interpolation strategy, albeit a computationally expensive one prone to local minima. \changed{Similarly, \citet{shim2023diffusion} proposed a diffusion-based SDF super-resolution technique for SDF grid data in the context of generative shape modeling.}{add-shim-reference}
Unfortunately, due to the use of soft constraints in the form of network losses (e.g., the heat loss of \citet{wang2025hotspot}), these methods cannot guarantee adherence to the input and the SDF properties.
Even when some of these constraints may be enforced constructively via the network architecture (e.g., the Lipschitz network designed by \citet{coiffier20241}), these methods consider only local (e.g., differential) properties and not the global consistency of the output SDF.

An exception is the recent work by \citet{Marschner2023}, which proposes a global loss that enforces the mathematical structure of Signed Distance Fields. This loss is used to repair \emph{pseudo-SDFs}: functions that locally \emph{look} like SDFs (i.e., satisfy eikonality) but are not actual distance fields, which are often produced by Boolean operations
\changed{and CSG trees \cite{reiner2011interactive}.}{csg}
Indeed, this type of repair operation is one of many possible applications of our more general interpolation framework \changed{(as \autoref{fig:offset-3d} shows).
\autoref{fig:marschner} shows that our method can function}{moved-offset-up}
as a non-neural replacement for the pseudo-SDF repair of \citet{Marschner2023} for discretely sampled pseudo-SDFs, avoiding the large training times and memory requirements of neural networks, and strictly enforcing global consistency without the need to soften constraints.

\section{Method}
A Signed Distance Function (SDF) of \changed{a surface \(\Omega\) that encloses a volume $\Sigma$ (i.e., $\Omega = \partial \Sigma$) }{interior-1} is a function \(\sdf : \mathbb{R}^d \rightarrow \mathbb{R}\) (\(d=2,3\)) such that
\changed{\begin{equation}\label{eq:smoothsdf}
    \sdf(\smoothpt) = \begin{cases}
        - \distance (\smoothpt, \surface) & \text{if }\quad x \in \Sigma \,, \\
        \distance (\smoothpt, \surface) & \text{otherwise}\, .
    \end{cases}
\end{equation}}{interior-2}

The input to our method has two components:
\begin{enumerate}
    \item SDF samples \(\sdfsample_1, \dots, \sdfsample_n \in \mathbb{R}\) at points \(\point_1, \dots, \point_n \in \mathbb{R}^d\) corresponding to an SDF \(\sdf\), i.e., \(\sdf(\point_i) = \sdfsample_i \; \forall i\) (a \emph{discrete SDF});
    \item a new, different set of query points $\point_{n+1},\dots,\point_{m}$ without known $s_i$ values.
\end{enumerate}
We write the discrete SDF as a collection of pairs,
\(
    \big\{ (\point_i,\sdfsample_i) \big\}_{i=1}^{n}
    \;.
\)
Our ultimate goal is to characterize the set of possible signed distance values $\sdfsample_{n+1},\dots,\sdfsample_{m}$ for query points $\point_{n+1},\dots,\point_{m}$ that are consistent with the inputs $\big\{(\point_i,\sdfsample_i)\big\}_{i=1}^{n}$ and with each other, and choose out of all these possibilities the values that best match some prior knowledge or heuristic about $\surface$. \changed{Critically, we will produce these new signed distance values without assuming any access to the original SDF $\sdf$ beyond the original samples $\big\{(\point_i,\sdfsample_i)\big\}_{i=1}^{n}$.}{no-access-to-sdf-2} 

\begin{figure}
\centering
\includegraphics{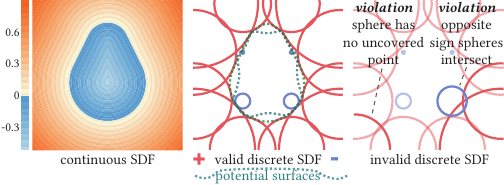}
\vspace{-0.8cm}
\caption{A continuous SDF \emph{(left)};
a discrete SDF visualized as spheres with multiple valid fitting surfaces \emph{(center)};
and an invalid SDF \emph{(right)}.}
\label{fig:sdf-didactic}
\end{figure}

We will produce these values sequentially, ensuring that any newly assigned signed distance $\sdfsample_{n+j}$ remains consistent with both the input $\{(\point_i,\sdfsample_i)\}_{i=1}^{n}$ and all the values $\{(\point_i,\sdfsample_i)\}_{i=n+1}^{n+j-1}$ generated previously.
Thus, it will suffice to answer the simpler question of finding the best possible signed distance value for a single new point $\point$.
We will answer this question in two steps: first, we will build the mathematical infrastructure necessary to characterize all possible signed distance values at $\point$ (Sections \ref{sec:definitions} and \ref{sec:single-point}); then, we will introduce scoring rules that enable our algorithm to select a desirable distance value out of all possibilities (\autoref{sec:scoring}). 

\subsection{Valid and invalid SDF values}
\label{sec:definitions}

Every signed distance sample \((\point_i,\sdfsample_i)\) of a surface $\Omega$ defines a sphere centered at \(\point_i\) with radius \(|\sdfsample_i|\) and sign $\sign(\sdfsample_i)$.
\changed{This radius, of course, is nonnegative;
we, nevertheless, sometimes casually use the language ``sign of the sphere'' to refer to the sign of \(\sdfsample_i\).}{sign-of-sphere-clarified}
As shown by \citet{rfts}, by definition of signed distance, given such a sample $\Omega$ must satisfy the following (see Fig.~\ref{fig:sdf-didactic}, \emph{center}):
\changed{
\begin{enumerate}
    \item \(\surface\) does not penetrate the sphere (i.e., it does not intersect the corresponding open ball, so $\Omega\cap \mathring{B}(\point_i,|\sdfsample_i|) = \emptyset$),
    \item \(\surface\) touches the sphere (i.e., $\Omega\cap \partial B(\point_i,|\sdfsample_i|) \neq \emptyset$), and
    \item \(\surface\) contains the sphere if it has negative sign, and does not contain it otherwise (i.e., $\partial B(\point_i,|\sdfsample_i|) \ \subseteq \Omega \iff \sdfsample_i<0$),
\end{enumerate}
where $B(\mathbf{a},b)$ is the closed ball centered at $\mathbf{a}$ with radius $b$}{tangency}.
Any surface $\surface$ that satisfies all conditions (1-3) for all spheres determined by a discrete set of signed distances $\big\{(\point_i,\sdfsample_i)\big\}_{i=1}^{n}$ is a valid reconstruction of the discrete SDF.
Conversely, we will say that an arbitrary set $\big\{(\point_i,\sdfsample_i)\big\}_{i=1}^{n}$ is \emph{valid} if there exists an $\surface$ that satisfies the conditions (1)-(3) for all of its corresponding spheres. The set is \emph{invalid} if it contains contradictions that make the existence of such an $\surface$ impossible.
Our first theoretical result is the observation that the validity of a discrete SDF can be completely characterized geometrically in terms of spheres and their intersections, without building or testing any surfaces.

\begin{proposition}[Discrete SDF validity]
\label{prop:discrete-sdf-validity}
A discrete SDF is valid iff (Fig.~\ref{fig:sdf-didactic}, \emph{right}):
\begin{enumerate}
    \item[(i)] No differently signed spheres intersect (they are allowed to touch at a single point).
    \item[(ii)] Every sphere has at least one point that is not in the ball enclosed by any other sphere (an \emph{uncovered point}).
\end{enumerate}
\end{proposition}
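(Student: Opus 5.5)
We prove the two directions separately. The necessity direction is a short check of the definitions. For sufficiency we construct an explicit surface $\surface$ from the spheres.

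\emph{Necessity.} Suppose some valid $\surface$, enclosing $\Sigma$, satisfies (1)--(3) for every sphere.

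For (i), take $\sdfsample_i<0<\sdfsample_j$. By (1) and (3), the open ball $\mathring{B}(\point_i,|\sdfsample_i|)$ is connected and misses $\surface=\partial\Sigma$, and it contains its center. Hence it lies entirely in $\Sigma$ or entirely in its complement. Condition (3) places the negative ball inside and the positive ball outside. So the two open balls are disjoint, which means the spheres can meet in at most one tangency point.

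For (ii), condition (2) gives a point $\mathbf{y}\in\surface\cap\partial B(\point_i,|\sdfsample_i|)$. If $\mathbf{y}$ lay in some other ball $B(\point_j,|\sdfsample_j|)$, condition (1) would force $\mathbf{y}$ onto the bounding sphere of that ball rather than its interior. So we read ``uncovered'' as ``not in any \emph{open} ball''. With that reading $\mathbf{y}$ is uncovered and (ii) holds. I will state this convention explicitly, since it makes (i) and (ii) consistent with each other.

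\emph{Sufficiency.} Let $U^-$ be the union of the closed negative balls and $U^+$ the union of the open positive balls. For zero samples, treat $\point_i$ as a degenerate ball and require it to lie on $\surface$. Condition (i) gives $U^-\cap U^+=\emptyset$. Define
\[
\Sigma := \text{the closure of } \mathbb{R}^d\setminus \overline{U^+}\cup U^- .
\]
A simpler candidate is $\Sigma=\mathbb{R}^d\setminus U^+$, with $\surface=\partial\Sigma$ the boundary of the positive union. Then:
\begin{itemize}
\item Negative open balls lie in the interior of $\Sigma$, because they are disjoint from the closure of $U^+$ except at tangency points. This handles condition (3).
\item Positive open balls are outside $\Sigma$. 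This gives (1) for the positive spheres.
\item Condition (2) for a positive sphere follows from (ii). An uncovered point on it is not inside any other positive open ball, so it lies on $\partial U^+=\surface$.
\end{itemize}
The difficulty is that this choice of $\surface$ need not touch the \emph{negative} spheres. To fix this, I would carve $\Sigma$ differently. Let $\Sigma$ be the closed union of the negative balls together with everything else except $U^+$, minus thin open ``tubes'' or tiny slivers.

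\emph{Main obstacle: making every sphere touched simultaneously.} Pick an uncovered point $\mathbf{y}_i$ on each sphere $i$. We need $\surface$ to pass through all the $\mathbf{y}_i$, avoid every open ball, and separate the negative balls from the positive ones.

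The set $F:=\mathbb{R}^d\setminus(\mathring{U}^-\cup U^+)$ is closed and contains every $\mathbf{y}_i$. We then choose
\[
\Sigma := U^- \cup \bigl(F\setminus N\bigr)
\]
for a suitable set $N$. One option is $\Sigma=\overline{U^-}$ itself, which makes negative spheres touched at their uncovered points. Positive uncovered points $\mathbf{y}_j$ that lie away from $U^-$ are handled by adjoining to $\Sigma$ a small closed region of $F$ whose boundary passes through $\mathbf{y}_j$. For instance, take a thin closed ``shell'' that hugs the positive sphere near $\mathbf{y}_j$ from outside, staying within $F$ and inside a neighborhood too small to meet any other ball's interior. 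This is possible because $\mathbf{y}_j$ is outside all other open balls, but it may be on other spheres. Near such points a degenerate sliver (even lower-dimensional) suffices if we allow nonregular sets.

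I expect the care to go into checking that these local modifications keep all three conditions for every sphere at once. Condition (1) holds because all modifications stay in $F$. Condition (3) is preserved because we never remove points of $U^-$ or add points of $U^+$.
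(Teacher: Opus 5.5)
Your necessity argument is the paper's, made more careful. The connectedness argument for (i) and the explicit open-ball reading of ``uncovered'' are exactly what the paper's ``the tangent point is uncovered'' silently relies on. With closed balls, necessity would already fail for two same-sign spheres touching $\Omega$ at the same point, one inside the other.

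For sufficiency you keep the paper's architecture: a surface separating the two signs, plus local pieces reaching the uncovered points. What differs is the base. The paper builds a power-diagram contour between $C_-$ and $C_+$ and attaches a slender spike for every sphere of either sign. You take $\partial U^- = C_-$ itself, which is legitimate since (i) already makes the closed negative balls disjoint from the open positive ones. Negative spheres are then touched for free, and only positive uncovered points need extra pieces. This dispenses with the power diagram and its cited separation property, and reduces everything to a few lines of point-set topology. The paper's version instead yields a more surface-like $\Omega$ (one sheet with attached spikes), rather than a boundary that may carry islands or isolated points.

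To make your argument airtight, commit to one construction and drop the abandoned candidates and the unspecified $N$. The shell needs $\mathbf{y}_j$ to lie off all other spheres, as you note. Instead, $\Sigma = U^- \cup \{\mathbf{y}_j : s_j > 0\}$ works uniformly: each $\mathbf{y}_j$ lies in the closure of its own open ball, which is disjoint from $\Sigma$, so $\mathbf{y}_j \in \partial\Sigma$. This also covers positive points lying on $U^-$, which you left untreated, and your checks of (1) and (3) go through unchanged.

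One caveat you share with the paper, which simply asserts $\mathbf{q}_i \in C_\pm$. $U^-$ touches a negative sphere only if some uncovered point of that sphere lies on $\partial U^-$. When $d+1$ or more negative spheres pass through a point, their balls can cover a whole neighborhood of it, so a sphere's only uncovered point may be interior to $U^-$. Replacing $U^-$ by the union of the \emph{open} negative balls removes this, at the cost of $\Sigma$ not being closed.
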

\begin{proof}
\changed{%
    Let \( \{ (\point_i,\sdfsample_i)\} _{i=1}^n\) be a valid SDF.
    Then, by definition, there is a surface \(\Omega\) such that \( \{ (\point_i,\sdfsample_i)\} _{i=1}^n\) is its SDF.
    Since different-signed spheres are on opposite sides of \(\Omega\), and their distance to \(\Omega\) is exactly their radii, they can at most touch each other, and never intersect, and thus \( \{ (\point_i,\sdfsample_i)\} _{i=1}^n\) fulfills the first condition.
    Since \(\Omega\) is tangent to every sphere, but does not penetrate any sphere, every sphere has at least one uncovered point (the tangent point), and thus \( \{ (\point_i,\sdfsample_i)\} _{i=1}^n\) fulfills the second condition.
}{discrete-sdf-validity-proof-1}

\changed{
    Let \( \{ (\point_i,\sdfsample_i)\} _{i=1}^n\) be a set of real values and points fulfilling the conditions of the proposition.
    Let \(C_-\) be the contour of all inside (negative) spheres, and let \(C_+\) be the contour of all outside (positive) spheres.
    By the first condition, these contours never intersect, but at most touch.
    Hence, there is a power contour, a polyhedron formed by the facets of the power diagram (formed by the power distance \(\Pi(i,j) = \lVert \point_i - \point_j \rVert^2 - |\sdfsample_i| - |\sdfsample_j|\)) that separates the disjoint inside and outside contours \cite[Section 4.5]{kohlbrenner2025isosurface}, and that never penetrates either contour.
    By the second condition, every sphere \((\point_i,\sdfsample_i)\) contributes at least one point \(\otherpoint_i\) to either \(C_-\) or \(C_+\).
    Let \(\tilde{\Omega}\) be the surface formed by the power contour (oriented so that \(C_-\) is inside and \(C_+\) is outside \(\tilde{\Omega}\)).
    For each \(\otherpoint_i\) we add a slender spike connecting a power contour face corresponding to sphere \(i\) with \(\otherpoint_i\) to \(\tilde{\Omega}\), creating the surface \(\Omega\).
    Each such spike must remain entirely within the uncovered region of that power cell until it meets the point \(\otherpoint_i\), which is possible for at least one power contour face per \(i\).
    This resulting surface \(\Omega\) is tangent to every sphere (because it passes through all \(\otherpoint_i\)), it does not intersect any sphere, all the negative spheres are inside, and all the positive spheres are outside.
    Hence \( \{ (\point_i,\sdfsample_i)\} _{i=1}^n \) is a valid discrete SDF.
}{discrete-sdf-validity-proof-2}
\end{proof}

\subsection{Finding \emph{a} valid SDF value for \emph{a single} new point \(\point\)}
\label{sec:single-point}

Let us now consider the simplified single-point interpolation case, where the set $\{(\point_i,\sdfsample_i)\}_{i=1}^{n}$ is valid and one wishes to find a signed distance value (signed radius) $\sdfsample$ for a new position (sphere center) $\point$ that makes the augmented set $\big\{(\point_1,\sdfsample_1), \dots, (\point_n, \sdfsample_n), (\point, \sdfsample) \big\}$ valid.
Two key challenges appear when attempting to use \autoref{prop:discrete-sdf-validity} to find such a value $\sdfsample$.
First, condition (ii) is quite cumbersome to check in practice: in 3D, a naive implementation requires finding all uncovered intersections of triples of spheres, which is \(O(n^4)\) in general.
More critically, while \autoref{prop:discrete-sdf-validity} lets us verify whether or not a configuration is valid, it does not provide instructions to \emph{select} a valid radius $\sdfsample_{n+1}$ out of the infinite set of possible choices.
In this section, we deduce one such constructive strategy.

\begin{figure}
\centering
\includegraphics{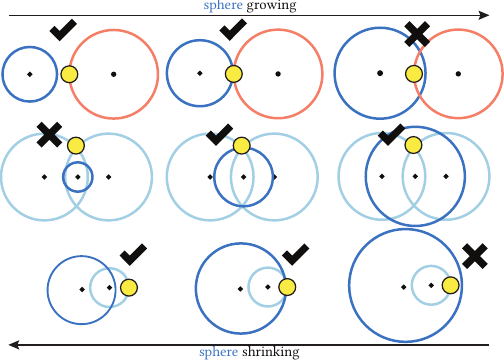}
\vspace{-0.7cm}
\caption{\changed{%
Adding a \textcolor{solidblue}{solid}-\textcolor{solidblue}{colored} sphere in the presence of existing \textcolor{lightblue}{lig}\textcolor{lightred}{ht}-\textcolor{lightblue}{col}\textcolor{lightred}{ored} spheres.
The growing or shrinking sphere in \(d=2\) can invalidate an SDF by (1) introducing an opposite-sign sphere intersection at a tangent point \emph{(top)}, (2) losing its only remaining uncovered point at a tangent point or an existing uncovered intersection \emph{(middle)}, and (3) fully covering a different sphere at either a tangent point or an existing uncovered intersection \emph{(bottom)}.}{better-explaining-didactic-growing}}
\label{fig:growing-spheres}
\end{figure}

Conceptually, we start with a sphere of infinitesimal radius centered at the query point $\point$, and grow the \changed{radius of the sphere}{saxon-1} $|s|$ until it contacts a \emph{grow-to point} \(\otherpoint \in \mathbb{R}^d\), i.e., a point at which SDF validity \emph{may} change.
At \(\otherpoint\), we check whether the conditions of Prop.~\ref{prop:discrete-sdf-validity} hold for the revised set \(\big\{(\point_1,\sdfsample_1), \dots, (\point_n, \sdfsample_n), (\point, \sdfsample) \big\}\). If they do, then \(\sdfsample\) is a valid SDF value, otherwise, we can continue on to check the next grow-to point.
It turns out that there are finitely many grow-to points we need to check in order to find a valid SDF value \(\sdfsample\).
The intuition is the following:
for a growing or shrinking sphere to change the discrete SDF's validity, the sphere has to 
(see \autoref{fig:growing-spheres})
\begin{itemize}
    \item gain or lose an intersection with an existing opposite-sign sphere, which can only happen if it grows or shrinks past a potential uncovered tangent point that it shares with the opposite-sign sphere; 
    \item gain or lose its only uncovered point, which can only happen if it grows or shrinks past a potential uncovered tangent point it shares with an existing sphere, or if it grows or shrinks past an uncovered intersection point of multiple existing spheres; or
    \item create or destroy the only uncovered point of an existing sphere, which similarly can only happen if it grows or shrinks past tangent or intersection points.
\end{itemize}

\begin{wrapfigure}[15]{r}{0.22\columnwidth}
    \vspace*{-24pt}
    \hspace*{-20pt}
    \includegraphics[width=1.4\linewidth]{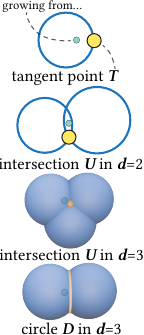}
\end{wrapfigure}
The following definition summarizes all the points at which the validity of the SDF can change (in addition to the trivial radius \(0\)).
\begin{definition}[Candidate grow-to points]
\label{def:grow-to-points}
    \quad

    \vspace{-10pt}
    Let \(T = \left\{\point_i \pm \frac{\point - \point_i}{\lVert \point -\point_i \rVert} |\sdfsample_i| \;\big|\; \forall i = 1, \dots, n\right\}\) be all points that our sphere $(\point,s)$ could grow to that would make it tangent to an existing sphere.

    Let \(U\) be all uncovered intersection points on at least two spheres if \(d=2\), or on at least three spheres if \(d=3\) \cite{Fang1986}.

    If \(d=3\), we also need to consider circles at which two spheres intersect.
    We define  \(\mathbf{v}_{ij}=\point - \boldsymbol{\zeta}_{ij}, \mathbf{w}_{ij}=(\mathbf{v}_{ij} \cdot \boldsymbol{\mu}_{ij})\boldsymbol{\mu}_{ij}\)
    where
    \(\boldsymbol{\zeta}_{ij},\rho_{ij},\boldsymbol{\mu}_{ij}\) are, respectively, the center, radius and normal of the intersection circle of two spheres \(i,j\) if they intersect in a circle that is not completely covered.
    Then, let \(D = \left\{ \boldsymbol{\zeta}_{ij} \pm \frac{\mathbf{v}_{ij}-\mathbf{w}_{ij}}{\lVert \mathbf{v}_{ij}-\mathbf{w}_{ij} \rVert} \rho_{i,j} \;\big|\; \forall i \neq j, \; i,j=1,\dots,n \right\}\)
    be the points that our sphere could grow to that would make it tangent to the uncovered circle formed by the intersection of two existing spheres. 

    Finally, \(P = T \cup U \cup D \cup \{p\}\) is the finite set of all candidate \emph{grow-to points}: the point itself and all points at which a growing sphere has the potential to change the validity of the SDF.
\end{definition}

The relevance of $P$ becomes obvious with our next theoretical result, which guarantees the existence of a point our new sphere can grow to that preserves the validity of the input configuration. In other words, the set of \emph{valid grow-to points}, which we denote $\hat{P}$, is always non-empty. This turns the continuous search for a valid distance value $\sdfsample$ into one over the finite set $P$.

\begin{proposition}[Grow-to point validity]\label{prop:validradii}
    There exists at least one \(q \in P\) such that \(\sdfsample=\lVert q-\point \rVert\) or \(-\lVert q-\point \rVert\) is a valid radius at \(\point \in \mathbb{R}^d\), i.e., \(\big\{(\point_1,\sdfsample_1), \dots, (\point_n, \sdfsample_n), (\point, \sdfsample) \big\}\) is a valid SDF.
    We call the set of such valid grow-to points $\hat{P}\subseteq P$.
\end{proposition}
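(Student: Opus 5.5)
Our plan is to produce an explicit valid value rather than search over all of \(P\). Since \(\{(\point_i,\sdfsample_i)\}_{i=1}^n\) is valid, there is a surface \(\Omega\) meeting conditions (1)--(3) for every input sphere. Let \(\phi_\Omega\) be its exact signed distance function and set \(\sdfsample^\ast=\phi_\Omega(\point)\). Then the augmented set is the restriction of a genuine SDF to \(n+1\) points, so it is valid by the forward direction of \autoref{prop:discrete-sdf-validity}. Hence the set of valid radii \(S\subseteq\mathbb{R}\) at \(\point\) is nonempty. It remains to show that \(S\) contains a value \(\pm\lVert\otherpoint-\point\rVert\) with \(\otherpoint\in P\).

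\textbf{Step 1: structure of \(S\).} Fix a sign, say positive, and write \(r=|\sdfsample|\). Conditions (i) and (ii) of \autoref{prop:discrete-sdf-validity}, for the configuration with the new sphere of radius \(r\), are each defined by finitely many closed or open conditions on \(r\). We will argue that the truth value of the whole validity test is locally constant in \(r\) except at radii where the growing sphere passes through one of three kinds of point. Such a point is a tangency with an existing sphere (the set \(T\)), an uncovered vertex where \(d\) existing spheres meet (the set \(U\)), or, when \(d=3\), an extremal point of an uncovered intersection circle relative to \(\point\) (the set \(D\)).

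The reason is the following. Condition (i) for the new sphere changes only when \(r\) crosses \(\lVert\point-\point_i\rVert\pm|\sdfsample_i|\), which are the tangency radii in \(T\). For condition (ii), we view the uncovered part of each sphere as a union of cells of the arrangement of spheres. A cell, or the whole uncovered set of a sphere, can appear or disappear only when the moving sphere passes a critical point of the distance-to-\(\point\) function restricted to a stratum of the arrangement. On a sphere (a 2D stratum) the critical points are the two points of \(T\). On an intersection circle (a 1D stratum) they are the points of \(D\). The 0D strata are exactly the points of \(U\). The degenerate radius \(0\) corresponds to \(\otherpoint=\point\).

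\textbf{Step 2: endpoints.} The previous step suggests that \(S\cap(0,\infty)\) and \(S\cap(-\infty,0)\) are finite unions of intervals whose endpoints lie in \(\{\lVert\otherpoint-\point\rVert:\otherpoint\in P\}\). We will then show that the valid set is closed, because the constraints in the conditions are non-strict. Touching is allowed, and the existence of an uncovered point is preserved under limits by compactness, provided "uncovered" means not in any \emph{open} ball. It follows that every component of \(S\) contains an endpoint. A bounded component has both endpoints in the critical set. An unbounded component cannot occur for \(n\ge1\), since a large enough sphere swallows the others. Taking the component of \(S\) containing \(\sdfsample^\ast\), its endpoint is a valid radius of the form \(\pm\lVert\otherpoint-\point\rVert\) with \(\otherpoint\in P\).

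\textbf{Main obstacle.} The hard step is the stratified critical-point claim in Step 1: that uncovered regions change topology only at points of \(T\cup U\cup D\). We would prove it with a Morse-theoretic, sweeping-sphere argument on the arrangement. First we assume general position; then we handle degeneracies by perturbation together with the closedness shown in Step 2. The delicate part is to check that the closedness convention (open balls for coverage, closed balls for touching) makes the limit radius itself valid and not merely a limit of valid radii.
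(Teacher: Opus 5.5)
Your argument is sound in outline, in general position (which the paper also tacitly assumes), but it takes a different route from the paper.

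\textbf{The paper's route.} The paper's proof is constructive. If $\point$ lies in no input ball, it takes $\sdfsample=0$, and $\point\in P$. Otherwise, with $\point$ in, say, a positive ball, it takes $\sdfsample$ to be the distance from $\point$ to the contour $C_+$ of the positive balls. It checks conditions (i) and (ii) directly: the new ball stays inside the positive union, touches $C_+$ at an uncovered point, and covers nothing new. It then locates the closest point of $C_+$ in $T$ (patch interior), $D$ (arc interior) or $U$ (arc endpoint).

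\textbf{Your route.} You get nonemptiness non-constructively from a witness surface. You then argue structurally that, per sign, the valid set is closed and bounded with boundary points at grow-to radii.

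\textbf{Your ``main obstacle'' is milder than you fear.} No Morse theory or tracking of cell topology is needed. Let $F$ be the complement of the union of the open input balls, and $K_j$ the uncovered part of input sphere $j$. Then the positive valid radii are exactly the $r\in[0,R]$ with $r=\lVert\mathbf{x}-\point\rVert$ for some $\mathbf{x}\in F$. Here $R$ is the minimum of the opposite-sign tangency radii $\lVert\point-\point_i\rVert-|\sdfsample_i|$, which are radii of points of $T$, and of $\max_{\mathbf{y}\in K_j}\lVert\mathbf{y}-\point\rVert$ over all $j$. Every remaining threshold is a local extreme value of the distance to $\point$ on $F$ or on some $K_j$. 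By the first-order condition on the active stratum, it is therefore attained at $\point$ itself or at a point of $T$, $D$ or $U$. This is the same analysis the paper performs for its single minimization.

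\textbf{Trade-off.} The paper's proof pins down the \emph{minimum} valid radius as $\distance(\point,C_\pm)$. That is exactly what the subsequent Corollary and the parallel pseudo-SDF repair rely on, and your argument does not deliver it. Yours shows more: in general position, both endpoints of every maximal interval of valid radii are grow-to radii. This better justifies restricting the search to $P$ when the scoring favors large radii.

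\textbf{Degeneracies.} Your perturbation plan for degenerate configurations is delicate, because perturbing can break input validity and $P$ changes discontinuously. The paper does not treat these cases ($\point=\point_i$, $\point$ on the axis of an intersection circle, tangent input spheres) either.
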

\begin{proof}
\changed{%
    If \(\point\) is not contained within any existing sphere,
    then it can have SDF value \(s=0\) and be a valid SDF sample.
    The closest possible \(\otherpoint \in T\) to \(\point\) also results in an SDF sphere \((\point,\sdfsample)\) that at most touches any existing sphere, and thus does not intersect any opposite-sign sphere, has an uncovered point, and does not cover any existing spheres' points.
}{grow-to-point-validity-proof-1}

\changed{%
    Assume therefore that \(\point\) is contained within an existing sphere (without loss of generality, a positive sphere).
    The distance \(\sdfsample\) between \(\point\) and the contour of all existing positive spheres \(C_+\) is a valid SDF value at \(\point\), since
    \begin{itemize}
        \item the negative spheres' contour \(C_-\) and the positive spheres' contour \(C_+\) can at most touch, i.e., \((\point,\sdfsample)\) cannot intersect any negative sphere;
        \item \((\point,s)\) has an uncovered point, namely the point at which it touches \(C_+\); and
        \item since the addition of \((\point,\sdfsample)\) does not change the positive spheres' contour, no new points of existing spheres are covered.
    \end{itemize}
}{grow-to-point-validity-proof-2}

\changed{%
    There is a grow-to point \(\otherpoint \in P\) that corresponds to the sphere radius \(\sdfsample\), namely it is the closest point to \(\point\) on the contour \(C_+\).
    Since \(C_+\) consists of sphere arcs (\(d=2\)) or patches (\(d=3\)), the closest point to \(\point\) is either the point \(\point_i \pm \frac{\point-\point_i}{\lVert \point-\point_i \rVert} |\sdfsample_i|\) in the middle of the arc/patch of sphere \(i\) (one of the points in \(T\)), or it is at a boundary of an arc/patch.
    For \(d=2\), the boundaries of all arcs are in \(U\).
    For \(d=3\), the boundaries of all patches are arcs themselves;
    the closest points to \(\point\) on these arcs are either in the middle of these arcs (\(D\)) or the boundaries of these arcs (\(U\)).
}{grow-to-point-validity-proof-3}
\end{proof}

\autoref{prop:validradii}'s constructive proof is the basis for our strategy to compute the \emph{closest valid grow-to point} $\qmin \in \hat{\growtopoints}$. Even further, it proves that $\rmin = \|\point_{n+1} - \qmin \|$ is exactly the \emph{minimum valid distance value} at $\point_{n+1}$.
In Sec.~\ref{sec:application-pseudosdf}, we will exploit a critical property of this minimum value: namely, that it can be computed independently for a large number of new points. We will use this property to design a simple, parallel algorithm to repair faulty signed distance data. 
While this minimum radius value can be computed very efficiently, naively assigning it to every new point $\point$ is far from the best interpolation strategy in most applications.
We thus explore the set $\hat{P}$ of valid grow-to points, and select from it a more desirable element.

\begin{figure}
\centering
\includegraphics{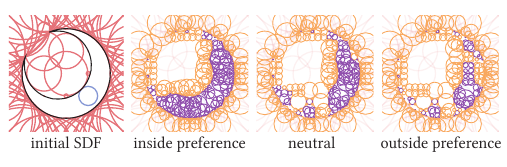}
\vspace{-0.9cm}
\caption{We score radii during refinement and reconstruction so that inside spheres are slightly preferred \emph{(center left)}, because we find that it often provides qualitatively better results than no preference \emph{(center right)} or preferring outside spheres \emph{(far right)}.}
\label{fig:radius-scores}
\end{figure}

\subsection{Finding a \emph{good} valid SDF value for a single new point \(\point\)}
\label{sec:scoring}

In practice, the set of possible signed distance values at $\point$ given by the valid grow-to points $\hat{P}$ is not only non-empty (\autoref{prop:validradii}), but large:
different choices of grow-to points will discriminate between all the possible surfaces that are consistent with the input SDF.
In this section, we explore this choice as an opportunity to encode any prior knowledge we may have about the true, unknown surface $\surface$.

\begin{wrapfigure}[10]{r}{0.2\columnwidth}
    \vspace*{-10pt}
    \hspace*{-20pt}
    \includegraphics[width=1.4\linewidth]{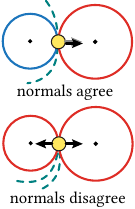}
\end{wrapfigure}
We decide which valid point to pick by assigning a \emph{score} to each $\otherpoint\in\hat{P}$ that can be used to incorporate a variety of application-specific priors.
For general-purpose refinement and reconstruction, we stick to a prior popular in previous work: \emph{smoothness}.
In smooth surfaces, the normal varies continuously across the surface.
If the surface $\surface$ is tangent to the sphere $(\point,\sdfsample)$ at a specific grow-to-point $q$, its normal direction at $\otherpoint$ is the direction of the vector $\point - \otherpoint$ (the reverse if the \changed{sign of the sphere}{saxon-2} is negative).
When a growing sphere hits a grow-to-point $\otherpoint$, we want to score this grow-to-point highest if it results in the smallest normal variation, and thus a smoother surface.
In particular, this means that if $\otherpoint$ is a tangency grow-to point (i.e., $\otherpoint\in T$ and $\otherpoint$ is on the surface of an input sphere $(\point_i,\sdfsample_i)$), the normal directions of the two spheres either \emph{perfectly agree} (most desirable) or \emph{perfectly disagree} (most undesirable), depending on their sign (see inset).
Between perfect agreement and perfect disagreement lie the uncovered intersection grow-to points $\otherpoint\in D \cup U$, for which we will measure agreement in terms of the angle between the normals. We prioritize those $\otherpoint$ for which the normals (partially) agree over those for which they (partially) disagree.
Finally, we prefer growing to the largest magnitude radius, but we slightly prefer negative-sign spheres to positive-sign spheres (by a factor of \(2\)), given that, in practice, we see many more outside initial samples than inside samples (see \autoref{fig:radius-scores}).
\changed{%
While this strategy still results in smooth surfaces, it does not oversmooth as much as some previous methods (see \autoref{fig:rfta-prior}).
}{moved-figure-up}
\changed{%
The pseudocode for our radius scoring algorithm is:
}{pseudocode-moved-1}
\begin{lstlisting}
function score_for_radius(p, s, pi, si, q):
  if is_tangency_point(q):
    if normals_agree(p, s, pi, si, q):
      type_score = 3*max_possible_radius
    else:
      type_score = 0
  else:
    if normals_agree(p, s, pi, si, q):
      type_score = 2*max_possible_radius
    else:
      type_score = max_possible_radius
  sign_score = s>0 ? 1 : 2
  return sign_score*abs(s) + type_score
\end{lstlisting}

\begin{figure}
\centering
\includegraphics{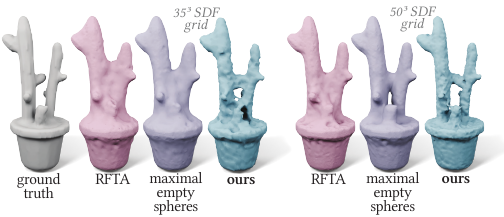}
\vspace{-0.8cm}
\caption{\changed{Methods like RFTA \cite{rfta} tend to oversmooth reconstructed surfaces.
Our method is better at recovering sharp edges and detailed features.}{prior-language-in-figure}}
\label{fig:rfta-prior}
\end{figure}

Combined, the previous three sections propose a strategy to interpolate a signed distance value for a new point $\point$ in a way that is consistent with a set of input samples $\{(\point_i,\sdfsample_i)\}_{i=1}^{n}$. A pseudocode for our interpolation algorithm is provided below.
In the sections that follow, we review several different applications of this strategy (Sections \ref{sec:application-reconstruction}, \ref{sec:application-refinement} and \ref{sec:application-pseudosdf}), and discuss performance improvements that eliminate the most computationally intensive steps (\autoref{sec:performanceimprovements}).

\begin{lstlisting}
//This function interpolates an SDF to the point p.
function interpolate_sdf_to(p, p1, ..., pn, s1, ..., sn):
  radii = []
  for (q,pi,si) in grow_to_points(p,p1,...,pn,s1,...,sn):
    for sign=-1,1:
      s = sign * len(q-p)
      radii.add((score_for_radius(p,s,pi,si,q), s))
  sort(radii) //in descending order
  for (score,s) in radii:
    if is_sdf_valid(p1,...,pn,p,s1,...,sn,s):
      return s
\end{lstlisting}

\changed{%
The pseudocode for computing a list of grow-to points is:
}{pseudocode-moved-2}
\begin{lstlisting}
//For an input interpolation point p and an SDF, this function returns a list of triples (q,pi,si) of a candidate grow-to point and the existing sphere the candidate is on.
function grow_to_points(p, p1, ..., pn, s1, ..., sn):
  points = []
  for i = 1,...,n:
    //Add tangent points
    for c = -1,1:
      q = pi + c*normalized(p-pi)*|si|
      points.add((q,pi,si))
    //Add uncovered intersection points
    for j = i+1,...,n:
      if d==2:
        if q1,q2 = sphere_intersect_2d(pi,|si|,pj,|sj|):
          for q = q1,q2:
            if point_uncovered(q,p1,...,pn,s1,...,sn):
              points.add((q,pi,si), (q,pj,sj))
      else if d==3:
        for k = j+1,...,n:
          if q1,q2 = sphere_intersect_3d(pi,|si|,pj,|sj|, pk,|sk|):
            for q = q1,q2:
              if point_uncovered(q,p1,...,pn,s1,...,sn):
                points.add((q,pi,si), (q,pj,sj), (q,pk,sk))
        //In 3D, add circle points
        if zeta,rho,nu = intersection_circle_uncovered(pi, |si|,pj,|sj|):
          q1 = project_to_closest_pt_on_circle(p,zeta,rho,nu)
          q2 = project_to_farthest_pt_on_circle(p,zeta,rho,nu)
          for q = q1,q2:
            if point_uncovered(q,p1,...,pn,s1,...,sn):
              points.add((q,pi,si), (q,pj,sj))
  return points
\end{lstlisting}

\begin{figure}
\centering
\includegraphics{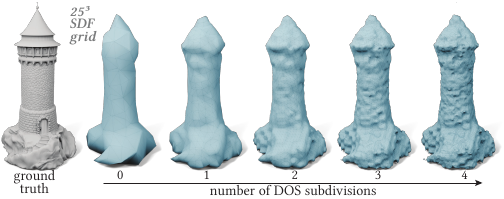}
\vspace{-0.7cm}
\caption{Generally, adaptively subdividing the DOS results in reconstructions with more detail, but with quickly diminishing returns.
This informs our choice of default parameter for the subdivision depth \(\subdivisiondepth\) to be \(2\) or \(3\).}
\label{fig:subdivision-levels}
\end{figure}

\section{Application: Refining signed distance data}
\label{sec:application-refinement}

The previous sections analyzed the set of valid signed distance values $\sdfsample$ at a single new point $\point$ that are consistent with a set of valid input samples $\{(\point_i,\sdfsample_i)\}_{i=1}^{n}$.
In practice, however, one often wants to find valid signed distances $\sdfsample_{n+1},\dots,\sdfsample_{m}$ at a large set of new points $\point_{n+1},\dots,\point_{m}$ near $\surface$'s zero level set.
We refer to this as \emph{refinement}.
Importantly, our refinement task differs from the one considered, for example, by \citet{kohlbrenner2025isosurface}: we do not have access to the ground-truth continuous SDF function, and must instead interpolate.
We advocate for a \emph{greedy} algorithm that computes a new value $\sdfsample_{n+j}$ by ensuring its validity with respect to the input as well as to all previously generated values.
However, a naive repetition of the algorithm proposed in \autoref{sec:scoring} would be very expensive: in 3D, its complexity would scale with $\sim m \times n^{7}$ (accounting for each validity check of each grow-to point).

\subsection{The Dual Octree SDF (DOS)}
\begin{wrapfigure}[10]{r}{0.3\columnwidth}
    \vspace*{-26pt}
    \hspace*{-20pt}
    \includegraphics[width=1.3\linewidth]{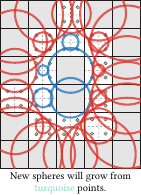}
\end{wrapfigure}
To avoid the naive refinement complexity, we will focus on the case in which the input samples lie on the cell centers of a regular grid.
We can immediately identify the set of \emph{interesting} grid cells that may contain the surface $\Omega$: those in which the center's stored distance value $\sdfsample$ measures less than half the cell's diagonal $\diagonal$ (otherwise the cell is fully covered by the sphere, and \(\Omega\) cannot be inside it).
Instead of refining the entire regular grid, we will adaptively subdivide only these interesting cells in a data structure we call a \emph{Dual Octree SDF (DOS)} (see inset, where non-interesting cells are gray). 
We repeat this subdivision process for the whole grid \(\subdivisiondepth\) times, with \(\subdivisiondepth = 2\) for \(d=3, n>20^3\), and \(\subdivisiondepth = 3\) otherwise (see \autoref{fig:subdivision-levels}).

To maintain the validity of distance samples at the neighboring cells, the radius of a new sphere generated from a new point inside the interesting cell cannot be arbitrarily large:
it has to be bounded by \(\max\left(\frac{\diagonal}{2} + |\sdfsample|, \frac{3\diagonal}{4}\right)\), where \(\diagonal\) is the diagonal of the cell, and \(|\sdfsample|\) is the radius of the cell's sphere.
Critically, this means that any grid distance value that does not intersect a cell's maximal sphere (the sphere with maximal possible radius at the cell center) is 
\emph{irrelevant} to that cell.  
To exploit this, we initially gather all \emph{relevant} spheres for each cell and use the parents' relevant spheres for each new child cell generated, greatly reducing the number of spheres we must iterate over in our algorithm.
For all input spheres, we can also precompute (and dynamically update) the set of uncovered intersections and circles $U \cup D$, significantly reducing the asymptotic cost of both gathering candidate grow-to points and checking their validity.

\begin{figure}
\centering
\includegraphics{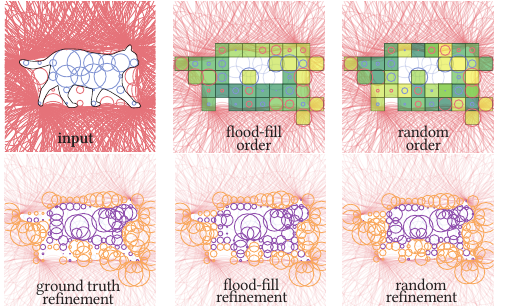}
\vspace{-0.8cm}
\caption{Our method performs subdivision on cells using a flood-fill ordering. Here we compare it to a random ordering, with our method performing slightly better in refining the tail and back legs.
The order of cells to be subdivided is indicated in shades of green; dark cells are subdivided first.}
\label{fig:subdivision-orders}
\end{figure}

\begin{figure*}
\centering
\includegraphics{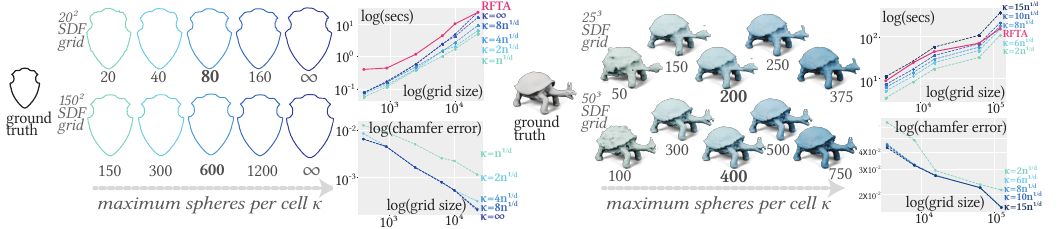}
\vspace{-0.7cm}
\caption{Culling spheres until we have \(\kappa\) relevant spheres per cell for refinement and reconstruction improves speed without degrading quality much.\changed{ For a comparable runtime to RFTA as well as high quality reconstruction in 3D, we choose \(\kappa = 8n^{1/3}\) as default.}{figupdate}}
\label{fig:culling}
\end{figure*}

\begin{figure}
\centering
\includegraphics{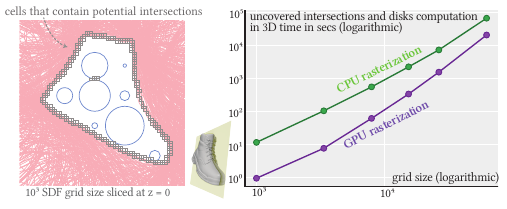}
\caption{\changed{%
To compute the initial uncovered intersections of all spheres, we rasterize the spheres to identify all pixels with more than two or three sphere contours, significantly reducing the number of sphere pairs and triples to check.
We further speed up this rasterization by performing it on the GPU.
}{furtherimprovements6}}
\label{fig:gpu-vs-cpu-rasterization}
\end{figure}

\subsection{Refinement order}
\label{sec:flood-fill-refinement}

In our sequential refinement, the set of possible values at a new point $\point_{n+j}$ will depend not only on the input $\{(\point_i,\sdfsample_i)\}_{i=1}^{n}$ but also on the previously generated $\sdfsample_{n+1},\dots,\sdfsample_{n+j}$.
Importantly, this \emph{greedy} nature of our algorithm means that it will be sensitive to the ordering of the new sample points $\point_{n+1},\dots,\point_m$.
In general, the generated spheres will be larger in cells that are less covered by other spheres. These larger spheres often correspond to surfaces with simpler topologies. Additionally, the growth of a new sphere in a given cell will be most impacted by the spheres in the cells directly neighboring it.
These empirical observations motivate the following refinement ordering strategy:
we begin by estimating each interesting cell's \emph{covered area ratio} by sampling a small number ($8^d$) of regularly spaced points in the cell and querying their containment in the cell's relevant spheres.
Then, we identify the interesting cell with the lowest covered area ratio and add it to a flood-fill priority queue. After a cell is popped from the queue and subdivided, its interesting neighbors are added to the queue, with priority given to the lowest uncovered areas.
We repeat this process until no interesting cells remain without subdividing.
Within each cell to be subdivided, we compute the covered area ratio of each of its children, and assign them new spheres in ascending order of this ratio
(see \autoref{fig:subdivision-orders}).

\subsection{Further performance improvements}
\label{sec:performanceimprovements}

\changed{%
The DOS data structure significantly improves the runtime performance over the naive \texttt{grow\_to\_points} and \texttt{interpolate\_sdf} algorithms.
We employ a few minor improvements that further boost performance in practice (though not asymptotically).
}{furtherimprovements1}

\changed{%
If the interpolation point \(\point\) is within an existing sphere, we know that the sign of \(\point\) must match the \changed{sign of the existing sphere}{saxon-3}, and we can skip checking all opposite-sign candidate radii.
Before even computing the first candidate grow-to point, we can compute a lower and upper bound on the final radius using heuristics:
no valid grow-to point can be closer than any tangent point of an existing sphere that contains \(\point\) (otherwise the growing sphere would be completely covered); no valid grow-to point can be farther than any tangent point of an existing opposite-sign sphere (otherwise the growing sphere would intersect the opposite-sign sphere); and no valid grow-to point can be farther away than both tangent points of any existing sphere, as in that case that existing sphere would have no uncovered point.
}{furtherimprovements2}

\changed{%
When subdividing, we can further reduce the number of relevant spheres to check for each child cell by employing a maximal sphere radius for the child of \(\max\left(|\sdfsample_{\textrm{parent}}| + \frac{\diagonal_{\textrm{child}}}{2}, \diagonal_{\textrm{child}}\right)\), where $|\sdfsample_{\textrm{parent}}| + \frac{\diagonal_{\textrm{child}}}{2}$ is the largest radius a sphere could have in this cell without for sure invalidating the parent sphere, and $\diagonal_{\textrm{child}}$ is the diagonal of the child cell: The farthest distance the center of any child sphere (on any lower subdivision level) could have from the center of the cell.
}{furtherimprovements3}

\changed{%
Instead of checking whether the SDF is valid for each candidate radius starting from the highest-scoring one, we employ a variety of shortcuts.
We do not check whether all spheres fulfill the conditions of \autoref{prop:discrete-sdf-validity}, but only the existing spheres that are close enough to \(\point\) to matter (and the growing sphere).
We first check whether any existing spheres' uncovered intersection points lie outside of the growing sphere, since if they do, that existing sphere is still valid;
if not, we recompute the intersections of that sphere with the growing sphere only and check whether any of those are uncovered.
If all existing spheres that matter are still valid, we secondly move on to determining whether the growing sphere is valid: if the growing sphere hits an existing sphere's uncovered intersection point, we know that the growing sphere is valid (it has at least that uncovered point); should this not be the case we check whether the sphere created any new uncovered intersections points, and if this also is not the case the growing sphere is invalid.
}{furtherimprovements4}

\begin{figure}
\centering
\includegraphics{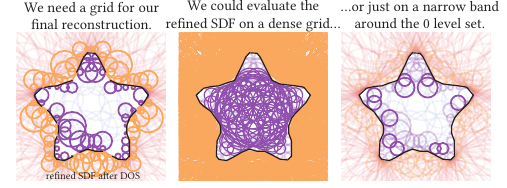}
\vspace{-0.7cm}
\caption{After refining the SDF with our DOS, we use marching cubes to reconstruct a surface, which requires samples on a uniform grid.
Reconstructing the DOS output (black, \emph{left}) exhibits gaps for cells missing some corner samples.
Instead of extending the refined SDF to a dense uniform grid \emph{(center)}, we extend it to a narrow uniform band around the 0 level set by computing \(\rmin\) at all relevant grid points \emph{(right)}.}
\label{fig:naive-vs-smart-upsampling}
\end{figure}

\changed{%
We do not consider all partially uncovered intersection circles in \(d=3\) for refinement and reconstruction, but only the completely uncovered ones, in order to cut down on the total number of circles to check.
This has not caused us to run into any problems in practice, but means that occasionally there could be a higher-scoring radius that is missed.
For pseudo-SDF repair, we do consider all partially uncovered disks in order to guarantee that the resulting SDF is globally consistent despite parallelization.
}{furtherimprovements5}

\changed{%
Updating uncovered intersection points during DOS refinement is cheap due to the small number of spheres involved, even though the complexity is \(O(k^{d+1})\) for \(k\) spheres.
On DOS initialization, however, this can become unacceptably large, since we have to consider all \(n\) spheres at once.
We speed up this intersection computation by rasterizing the contour of the union of all spheres.
Each pixel with two or more spheres has a potential intersection for \(d=2\) or uncovered intersection circle for \(d=3\), and each pixel with three or more spheres has a potential intersection for \(d=3\).
A k-d tree is used to find out which sphere contributes to which pixel, and we then perform intersection tests only on this extremely reduced number of sphere pairs and triplets (see \autoref{fig:gpu-vs-cpu-rasterization}, \emph{left}).
Following \citet[Fig.~10]{rfta}, rasterization performance is further improved using the GPU (see \autoref{fig:gpu-vs-cpu-rasterization}, \emph{right}).
Our default GPU rasterization resolution is \(10 n^{\frac{1}{d}}\) pixels along one axis, rounded up to the nearest power of \(2\) that is \(\geq 64\).
Even though our \texttt{interpolate\_sdf\_to} procedure should always find a valid radius for every point it is called at, due to numerical issues or floating point round-off errors, it might happen that we need to resort to a fallback procedure in very rare cases.
In this case, we run \texttt{min\_valid\_radius}, which is less vulnerable to floating point issues as it performs no SDF validity checks.
}{furtherimprovements7}

\begin{figure}
\centering
\includegraphics{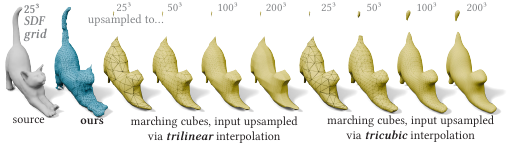}
\vspace{-0.7cm}
\caption{Our \changed{interpolation}{} approach differs from simply performing traditional interpolation (e.g., trilinear, tricubic) of the SDF data onto a finer grid.
We use \emph{global} information contained in an input SDF, which recovers a highly detailed surface, preserving features such as the tail and ears of the cat.}
\label{fig:comparison-with-mc-interpolation}
\end{figure}

\subsubsection*{Bounding the number of spheres per cell}
\changed{%
To improve the runtime of our method even more, we employ a \emph{culling} procedure.
As the runtime is polynomial in the number of relevant spheres per cell,
we cull spheres from the input until the largest number of relevant spheres per cell is \(\leq \kappa\), starting with the largest sphere of the cell with the most relevant spheres.
The output SDF is still valid with respect to all input spheres that were not culled, but can violate culled input spheres.
For reconstruction, our default value of \(\kappa=4 \cdot n^{1/2}\) for \(d=2\) and \(\kappa=8 \cdot n^{1/3}\) for \(d=3\) does not lead to a noticeable decrease in quality, but a significant increase in speed (see \autoref{fig:culling}).
}{culling-moved-up}

\begin{figure}
\centering
\includegraphics{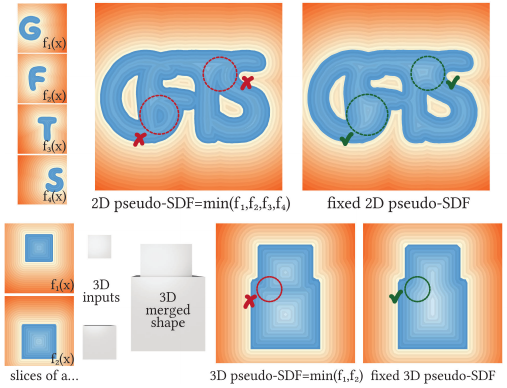}
\vspace{-0.8cm}
\caption{Given a pseudo-SDF (left), our method can return a repaired, valid SDF (right) by computing the minimum valid radius of a sphere at every invalid input sample.}
\label{fig:pseudo-sdf}
\end{figure}

\section{Application: Reconstruction}
\label{sec:application-reconstruction}

The most commonly used methods to reconstruct triangle meshes from discrete SDF data, like Marching Cubes \cite{Lorensen1987} and Dual Contouring \cite{Ju2002}, are \emph{local} strategies that individually process the information contained in each cell to output a local approximation of the surface in that cell.
These methods are generally robust and computationally efficient.
Recently, \citet{rfts} noted that useful distance information is not restricted to individual cells; instead, samples arbitrarily far from the surface can also contain relevant information about it.
This insight generated a surge of interest in \emph{global} reconstruction methods \cite{rfta,kohlbrenner2025polyhedral,kohlbrenner2025isosurface} that seek to exploit all information from every SDF sample, albeit with significant impact on computational cost and robustness.
This context highlights the unique benefit of the refinement strategy introduced in \autoref{sec:application-refinement}.
The set of SDF samples in the DOS at its finest depth is a high-resolution narrow band of new SDF data in the local neighborhood of the surface \emph{that is consistent with all input SDF samples},
i.e., these are new \emph{local} samples that nevertheless properly reflect the \emph{global} information from all of the input's far-away signed distance samples.
Critically, this means that we can pair our refinement strategy with an efficient, robust local reconstruction method (e.g., Marching Cubes) to produce a reconstruction that inherits the accuracy of global methods with the desirable algorithmic properties of local ones.
\begin{wrapfigure}[14]{r}{0.3\columnwidth}
    \vspace*{-12pt}
    \hspace*{-20pt}
    \includegraphics[width=1.3\linewidth]{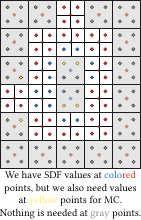}
\end{wrapfigure}
\changed{This is different from applying a naive interpolation method and then using Marching Cubes: Our refinement prior to reconstruction ensures all global information is used (see \autoref{fig:comparison-with-mc-interpolation}).}{moved-stuff-up}

However, the lowest subdivision level of our DOS is not a complete narrow band containing the surface.
In practice, there will be some vertices in this narrow band for which we have not computed a radius, since they belong to \emph{uninteresting}
cells that were not subdivided earlier (see inset).
Hence, we identify these points, compute their minimum valid distance $\rmin$, and pass the entire sparse narrow band to Marching Cubes.
Using a sparse narrow band significantly speeds up the computation compared to using a dense grid (see \autoref{fig:naive-vs-smart-upsampling}). Furthermore, as we discuss below, these minimum valid distances can be computed efficiently in parallel, \changed{which makes this postprocessing step efficient to compute}{explaining-parallel-4}.

\section{Application: pseudo-SDF repair}
\label{sec:application-pseudosdf}

The global SDF refinement strategy proposed in \autoref{sec:application-refinement} required each new signed distance to be assigned \emph{sequentially}, since a given choice of $\sdfsample_{n+j}$ can easily affect the set of valid choices for all subsequent points $\point_{n+j+1},\dots,\point_{m}$.
Computing the new spheres' radii must then be done one at a time, to ensure the new sphere radii remain consistent with the previously assigned ones. Thus, the loop in \autoref{sec:flood-fill-refinement} cannot (easily) benefit from parallelization.
However, a critical realization is that for the specific case in which the new points are assigned the \emph{minimum valid radius} constructed in \autoref{prop:validradii}, this consistency is guaranteed theoretically.

\begin{corollary}
    \label{prop:rmin}
    Let $\{(\point_{i},\sdfsample_i)\}_{i=1}^n$ be a set of valid signed distance samples, and consider a new set of points $\point_{n+1},\dots,\point_{m}$.
    Let $\rmin_{n+j}$ be the minimum value that makes $\{(\point_{i},\sdfsample_i)\}_{i=1}^n \cup (\point_{n+j},\rmin_{n+j})$ valid (which exists and can be calculated using \autoref{prop:validradii}). Then, the complete set
    \[
    \{(\point_{i},\sdfsample_i)\}_{i=1}^n \cup \{(\point_{n+j},\rmin_{n+j})\}_{j=1}^{m-n}
    \]
is valid.
\end{corollary}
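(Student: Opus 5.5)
The plan is to identify each $\rmin_{n+j}$ explicitly with the construction in the proof of \autoref{prop:validradii}, and then to check the two conditions of \autoref{prop:discrete-sdf-validity} for the augmented set. The key invariant is that \emph{no new sphere changes the union of the existing same-sign balls}. As in that proof, write $C_\pm$ for the contours (boundaries) of the unions $\Sigma_\pm=\bigcup_{\sdfsample_i\gtrless 0}B(\point_i,|\sdfsample_i|)$ of the input positive and negative balls.

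\textbf{Step 1: characterize $\rmin_{n+j}$.} I would show the following.
\begin{itemize}
\item If $\point_{n+j}\notin\Sigma_+\cup\Sigma_-$, then $\rmin_{n+j}=0$.
\item If, say, $\point_{n+j}\in\Sigma_+$, then $|\rmin_{n+j}|=\distance(\point_{n+j},C_+)$ and the sign is positive.
\end{itemize}
For the second case, the proof of \autoref{prop:validradii} already shows this value is valid, so only minimality is needed. Any smaller radius gives a closed ball contained in the interior of $\Sigma_+$. Every point of its sphere then lies in the open ball of some input sphere, so there is no uncovered point and condition (ii) fails.

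The sign must also be positive. A negative sphere centered at a point inside a positive ball (at positive distance from that ball's boundary) intersects that ball, violating (i). The degenerate case where $\point_{n+j}\in C_+$ gives radius $0$ anyway.

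The consequence of Step 1 is that every new ball $B_j=B(\point_{n+j},|\rmin_{n+j}|)$ satisfies $B_j\subseteq\Sigma_{\sign}$ for its sign. Hence its open ball lies in $\mathrm{int}\,\Sigma_{\sign}$, and it touches $C_{\sign}$ at some point $\otherpoint_j$. Radius-zero points lie outside both unions.

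\textbf{Step 2: condition (i).} Old--old pairs satisfy (i) by the validity of the input. For any pair involving new spheres, each ball lies in the closure of its sign's union. Since $\Sigma_+$ and $\Sigma_-$ at most touch (input validity), their interiors are disjoint and neither interior meets the other closed union. So opposite-sign balls, new or old, can only touch.

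\textbf{Step 3: condition (ii).} This is the step I expect to need the most care, because the new spheres are mutually unaware of one another.
\begin{itemize}
\item For a new sphere $j$, I use $\otherpoint_j\in C_+$ (or $C_-$). It lies in no open input ball of the same sign, since it is on the boundary of that union. It lies in no open ball of the opposite sign, by Step 2. It lies in no other new open ball, since those are contained in $\mathrm{int}\,\Sigma_\pm$ and $\otherpoint_j$ is on the boundary of the union.
\item For an old sphere $i$, take an uncovered point $\mathbf{u}$ given by the input's validity. Since $\mathbf{u}$ lies on sphere $i$ and in no other input ball, it lies on $C_{\sign(\sdfsample_i)}$. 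The same interior argument then shows $\mathbf{u}$ is not inside any new open ball.
\item Radius-zero new points lie outside $\Sigma_+\cup\Sigma_-$, so they are not in any open ball. They also do not enter the open ball of any other new sphere, since those balls are contained in the input unions.
\end{itemize}

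The main obstacle is therefore the monotonicity fact underlying all of this: adding balls contained in $\Sigma_\pm$ leaves $C_\pm$ unchanged. This is what makes each sphere's witness independent of the others and permits the parallel assignment. Everything then follows by the pairwise and witness checks above, together with \autoref{prop:discrete-sdf-validity}.
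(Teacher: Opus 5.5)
Your route is the paper's. You identify $\rmin_{n+j}$ with the distance to the same-sign contour, use the touching point as the new sphere's witness, and argue that no new open ball reaches the contour, so every witness survives; these are the paper's properties (1) and (2). You also prove minimality and check condition (i), which the paper only asserts (via \autoref{prop:validradii}) or skips.

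\textbf{The gap.} It lies in your explicit choice $C_\pm=\partial\Sigma_\pm$, combined with the tacit claim that $\mathrm{int}\,\Sigma_\pm$ equals the union of the open same-sign balls. That claim fails outside general position. Take three positive closed unit discs centred on the unit circle at angles $90^\circ,210^\circ,330^\circ$:
\begin{itemize}
\item all three circles pass through the origin;
\item every other point at distance less than $1$ from the origin lies in one of the open discs;
\item yet the origin lies in none of them.
\end{itemize}
So the origin is an interior point of $\Sigma_+$ that is uncovered in the sense of \autoref{prop:discrete-sdf-validity}.

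At $\point=(0,0.1)$ your formula gives $\distance(\point,\partial\Sigma_+)\approx 0.95$. But the positive sphere of radius $0.1$ through the origin already satisfies (i) and (ii), so the true minimum is $0.1$. This is also what \texttt{min\_valid\_radius} returns, since the origin is an uncovered intersection point in $U$. Your sentence ``every point of its sphere then lies in the open ball of some input sphere'' is exactly what fails.

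The same confusion breaks the validity you import and your Step 3, where you assert that an old sphere's uncovered point lies on $C_{\sign(\sdfsample_i)}$. Add the positive disc of radius $0.2$ centred at $(0,0.2)$. Its only uncovered point is the origin, and the input stays valid. Your radius $\approx 0.95$ at $(0,0.1)$ then covers the origin and makes the augmented set invalid.

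\textbf{The repair.} Change the definition and your argument goes through almost verbatim. Let $O_\pm$ be the union of the open input balls of each sign, and let $K_\pm:=\Sigma_\pm\setminus O_\pm$. Then $\partial\Sigma_\pm\subseteq K_\pm$, and (i) gives $O_\pm\cap\Sigma_\mp=\emptyset$.
\begin{itemize}
\item $\rmin_{n+j}=0$ exactly when $\point_{n+j}\notin O_+\cup O_-$.
\item Otherwise $|\rmin_{n+j}|=\distance(\point_{n+j},K_\pm)$, with the sign forced as in your argument. A segment from $\point_{n+j}$ that leaves $\Sigma_\pm$ must cross $\partial\Sigma_\pm\subseteq K_\pm$. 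Hence the open ball of that radius lies in $O_\pm$.
\item This gives minimality, since every smaller sphere is entirely covered.
\item It also shows that every new closed ball lies in $\Sigma_\pm$ and every new open ball lies in $O_\pm$.
\end{itemize}
All witnesses lie outside $O_+\cup O_-$, hence outside every new open ball. These are the nearest points $\otherpoint_j\in K_\pm$, the old uncovered points, and the radius-zero centres. Your Step 2 survives unchanged, since $O_\pm\subseteq\mathrm{int}\,\Sigma_\pm$. This is precisely the paper's properties (1) and (2) with ``contour'' read as $K_\pm$.
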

\begin{proof}
\changed{%
    Let $C_\pm$ be the positive and negative contours of the input spheres $\{(\point_{i},\sdfsample_i)\}_{i=1}^n$.
    If a point $\point_{n+j}$ is outside an existing sphere, its minimum valid radius is $0$.
    As shown in \autoref{prop:validradii}, the minimum valid radius for each $\point_{n+j}$, $\rmin_{n+j}$ inside an existing sphere is the distance from $\point_{n+j}$ to the contour $C_{\pm}$ of the appropriate sign. This observation has two immediate consequences:
    \begin{enumerate}
        \item The sphere $(\point_{n+j},\rmin_{n+j})$ is tangent to the appropriately signed input sphere contour $C_\pm$. Therefore, it contains an intersection point $q$ that is not covered by the interior of any of the input spheres.
        \item The sphere $(\point_{n+j},\rmin_{n+j})$ is the minimal sphere centered at $\point_{n+j}$ that is tangent to $C_\pm$. Consequently, its interior does not contain any point on the contour $C_\pm$.
    \end{enumerate}
}{corollary-proof-1}

\changed{%
    For each $j$, the first of these properties ensures that the sphere $(\point_{n+j},\rmin_{n+j})$ contains an intersection point that is not covered by any of the input spheres. In turn, the second property (when applied to all other output spheres) ensures that this point is also not covered by any of the new spheres. Thus, $(\point_{n+j},\rmin_{n+j})$ contains an uncovered intersection point.
    Finally, the second property also ensures that all the input spheres remain valid after adding all $(\point_{n+j},\rmin_{n+j})$. Thus, all spheres $\{(\point_{i},\sdfsample_i)\}_{i=1}^n \cup \{(\point_{n+j},\rmin_{n+j})\}_{j=1}^{m-n}$ are valid.
}{corollary-proof-12}
\end{proof}

\begin{figure}
    \centering
    \includegraphics[]{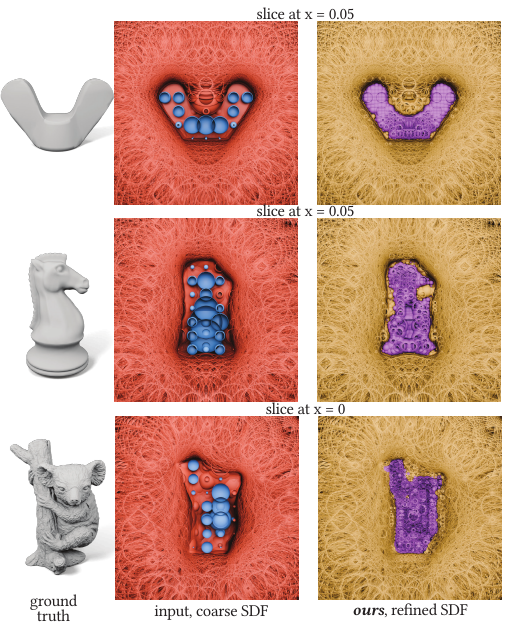}
    \vspace{-0.7cm}
    \caption{\changed{Our method refines a coarse SDF onto a finer grid in a way that avoids the pitfalls of the naive interpolation methods of \autoref{fig:fig2}.
    The SDFs are visualized as 2D slices through a volume populated with spheres.}{new-figure-refinement}}
    \label{fig:refinement}
\end{figure}

The value of $\rmin$ can be computed even more efficiently: since it is the minimum valid radius, we can ignore the possibility that the new sphere covers a previously uncovered sphere and only look for the minimum radius in $P$ that causes the growing sphere to have an uncovered point.
\changed{%
Our algorithm for finding \(\rmin\) is:
}{newalg-smallestrad}
\begin{lstlisting}
//Return the smallest possible valid radius for a sphere at this point.
function min_valid_radius_for_known_sign(p, sign, p1, ..., pn, s1, ..., sn):
  rmin = inf
  //Find the closest possible uncovered intersection. Since this is the smallest valid radius, no chance of covering any other sphere, so no SDF validity checks needed.
  for i = 1,...,n:
    //Skip all checks with opposite-signed spheres, as we will remain within the same-signed contour.
    if sign*si < 0:
      continue
    for j = i+1,...,n:
      if d==2:
        if q1,q2 = sphere_intersect_2d(pi,|si|,pj,|sj|):
          for q = q1,q2:
            if point_uncovered(q,p1,...,pn,s1,...,sn):
              rmin = min(rmin, len(q-p))
      else if d==3:
        for k = j+1,...,n:
          if q1,q2 = sphere_intersect_3d(pi,|si|,pj,|sj|, pk,|sk|):
            for q = q1,q2:
              if point_uncovered(q,p1,...,pn,s1,...,sn):
                rmin = min(rmin, len(q-p))
        //In 3D, add circle points
        if zeta,rho,nu = intersection_circle_uncovered(pi, |si|,pj,|sj|):
          q1 = project_to_closest_pt_on_circle(p,zeta,rho,nu)
          q2 = project_to_farthest_pt_on_circle(p,zeta,rho,nu)
          for q = q1,q2:
            if point_uncovered(q,p1,...,pn,s1,...,sn):
              rmin = min(rmin, len(q-p))
    //Add tangent points, but only do uncovered check if they can possible be a closest point. As before, no SDF validity check needed.
    for i = 1,...,n:
      for c = -1,1:
        q = pi + c*normalized(p-pi)*|si|
        if len(q-p) < rmin:
          if point_uncovered(q,p1,...,pn,s1,...,sn):
            rmin = min(rmin, len(q-p))
    return sign*rmin
\end{lstlisting}

\begin{figure}
    \centering
    \includegraphics[]{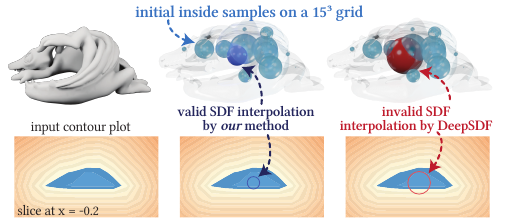}
    \vspace{-0.7cm}
    \caption{\changed{Given only a coarse SDF as input, our method guarantees valid SDF interpolation at arbitrary points in space, while neural approaches such as \citet{DeepSDF_CVPR} can produce contradictory SDF values.}{newfigure-nsdf-viol}}
    \label{fig:nsdf-violation}
\end{figure}

\begin{figure}
    \centering
    \includegraphics{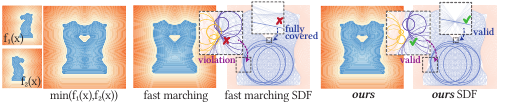}
    \vspace{-0.8cm}
    \caption{\changed{Fast approximate redistancing methods for pseudo-SDFs, such as the Fast Marching Method (applied to a linearly interpolated zero level set) \cite{scikit-fmm}, don't guarantee the validity of the new SDF. Our method produces a valid SDF after repairing.}{new-figure-pseudo-fmm}}
    \label{fig:pseudo-fmm}
\end{figure}

We now review an application for which only computing
$\rmin$ is sufficient, allowing us to make full use of \autoref{prop:rmin}, this improved efficiency, and parallelization: \emph{repairing} pseudo-SDFs by converting them into \changed{valid}{} SDFs without modifying the zero level set.
Pseudo-SDFs are often accurate near the surface's zero level set \cite{Marschner2023}, but contain entire regions away from it in which the distance to the surface is underestimated (see \autoref{fig:pseudo-sdf}).
This makes their repair an ideal candidate for the minimum distance refinement strategy enabled by \autoref{prop:rmin}.
Concretely, from a discrete set of pseudo-SDF samples, we propose first gathering a list of all fully covered spheres (i.e., whose SDF value is underestimated).
We remove the distance value stored for all of them, leaving us with a set of valid SDF samples and a separate set of points with no assigned distance.
We loop (in parallel) over this set, assigning \(\rmin\) to each point.
Since pseudo-SDFs are conservative  (i.e., absolute lower bounds of the true distance \cite{Marschner2023}) we need only consider radii larger than the point's originally assigned distance.
This results in a repaired, globally consistent set of \changed{valid}{true-to-valid} SDF samples.

\section{Experiments and results}
\label{sec:results}

We now validate our implementation choices, compare to prior work, and show applications of our method.
Details for the experiments in this section, such as implementation details, errors, timing, and parameter choices, can be found in the \changed{Appendix}{supptoapp}.
We plan to make our code publicly available after acceptance.

Our algorithm interpolates an SDF value at a given position in space and guarantees (unlike previous work) that the resulting SDF, with the addition of the new point, is valid.
Beyond this important theoretical contribution, our work has application in three areas of SDF processing that we highlight here: \emph{SDF refinement}, \emph{surface reconstruction}, and \emph{pseudo-SDF repair}.

\subsubsection*{Refinement}
We refine a coarse SDF in Figs. \ref{fig:teaser}, \ref{fig:redistancing}, \changed{\ref{fig:refinement}}{more-refinement-result} and show that previous methods output SDFs that violate the input SDF.
\changed{%
\autoref{fig:nsdf-violation} shows that even recent neural methods which allow for interpolation of an SDF to any point do not guarantee a valid output SDF or compatibility with the input spheres.
}{discussed-new-refinement-comparison}
We justify our choice of the greedy order in which we refine our SDF in \autoref{fig:subdivision-orders}, and the preference for inside spheres in our radius scoring procedure in \autoref{fig:radius-scores}.
Both refinement and reconstruction require the choice of a subdivision level \(\subdivisiondepth\), which we ablate in \autoref{fig:subdivision-levels}.

\begin{figure}
    \centering
    \includegraphics[]{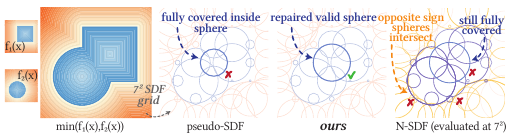}
    \vspace{-0.8cm}
    \caption{\changed{Unlike some neural methods for pseudo-SDF repair \cite{Marschner2023}, our method guarantees SDF validity after repairing.}{new-experiment}}
    \label{fig:nsdf-didactic}
\end{figure}

\begin{figure}
    \centering
    \includegraphics{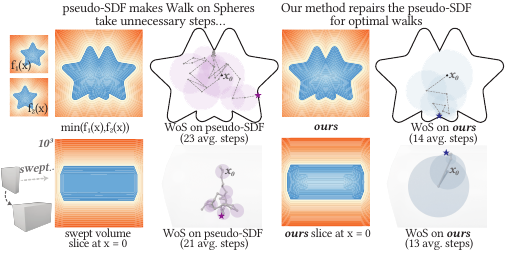}
    \vspace{-0.8cm}
    \caption{\changed{Walk-on-spheres on a pseudo-SDF result of a CSG operation on SDFs causes longer interior walks, whereas using our true repaired SDF enables faster arrival within the boundary \(\varepsilon\) (and hence termination).
    \emph{Top}: Union of two 2D stars.
    \emph{Bottom}: Swept volume of a 3D block.}{new-figure-wos}}
    \label{fig:wos}
\end{figure}

\subsubsection*{Pseudo-SDF repair}
Pseudo-SDF repair is a critical task with applications in industrial design, manufacturing and 3D modeling.
In Figs.~\ref{fig:teaser}, \ref{fig:pseudo-sdf}, we show that our algorithm can repair pseudo-SDFs, converting them into SDFs that are theoretically valid and can be used by downstream tasks (e.g., sphere tracing).
\changed{%
\autoref{fig:marschner} compares to one of the few prior works attempting this task \cite{Marschner2023}.
\autoref{fig:nsdf-didactic} specifically shows that their approach does not guarantee that the repaired SDF is consistent with the input pseudo-SDF, while ours is consistent.
}{}
While \autoref{fig:marschner}'s setup is slightly different (they output a smooth function with much higher theoretical resolution), for the case of discrete samples, we manage to repair pseudo-SDFs much more efficiently.
\changed{\autoref{fig:pseudo-fmm} shows that merely using the Fast Marching method \cite{sethian1996fast} for redistancing the zero level set of a pseudo-SDF is not enough and can lead to SDF violations.}{new-pseudo-fmm-text}

\begin{figure}
    \centering
    \includegraphics[]{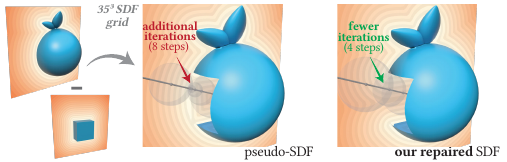}
    \vspace{-0.7cm}
    \caption{\changed{Using sphere tracing to render a pseudo-SDF (middle) constructed as the difference of two SDFs (left) requires more
iterations than rendering a repaired SDF (right) due to incorrect exterior distance values.}{new-figure-sphere-tracing}}
    \label{fig:sphere-tracing}
\end{figure}

\begin{figure}
\centering
\includegraphics{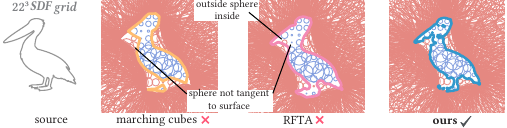}
\vspace{-0.7cm}
\caption{Our reconstructed surfaces are consistent with the input SDF up to the resolution of the finest subdivision level of the DOS used for reconstruction.
This is not true for previous work.}
\label{fig:sdf-compliance-comparison}
\end{figure}

\changed{%
An important use of pseudo-SDFs is in CSG operations.
It is easy to compute Boolean operations on SDFs by simply taking maxima and minima:
the zero level set of \(\min(\sdf_1, \sdf_2)\) is the union of the two shapes, even though the resulting function is only a pseudo-SDF.
SDFs are also extremely useful for offsetting operations:
a surface offset by \(\omega\) is simply the SDF meshed at the level set \(\omega\) instead of \(0\).
For this basic CSG operation to work, however, it is imperative that the SDF is a \emph{valid} SDF.
Invalid SDFs such as the pseudo-SDFs resulting from simple \(\min\)-unions generate wrong offsets, as can be seen in \autoref{fig:offset-3d}, where we take the offsets of unioned objects (the offset of a pseudo-SDF is, wrongly, the offsets of the individual shapes before unioning).
Repairing the SDFs with our method guarantees that the offset is computed correctly.
SDFs are also a useful geometric representation for solving PDEs with Monte Carlo \emph{Walk on Spheres} methods \cite{Sawhney2020,Sawhney2023}.
In order to guarantee that the sphere walks from an evaluation point to the boundary are as short as possible, it is imperative that the SDF validly describes the true distance to the boundary at any point (otherwise the walk takes too many steps).
Repairing a pseudo-SDF into a \changed{valid}{true-to-valid} SDF with our method thus significantly speeds up Monte Carlo PDE solvers on implicit functions resulting from CSG operations, as in \autoref{fig:wos}.
Implicit functions like SDFs are also used in rendering, and can be visualized with \emph{sphere tracing} \cite{Hart1996,Liu2020}.
While sphere tracing can be used for more general implicit surface types, including pseudo-SDFs, it is most efficient for true distance fields; otherwise the number of required evaluations along a single ray goes up significantly.
Repairing a pseudo-SDF with our method drastically cuts the number of evaluations required \autoref{fig:sphere-tracing}.
Thus, while both Monte-Carlo-solving and sphere tracing technically accept pseudo-SDFs, they become inefficient and profit from having exact distance data available to them.}{more-pseudo-sdf-writing}

\subsubsection*{Reconstruction}
In Figs.~\ref{fig:teaser}, \ref{fig:comparison-with-mc-interpolation} we show that our reconstruction method extracts global information, producing detailed reconstructions of coarse SDFs that cannot achieved with local reconstruction methods, even if they employ high-order naive interpolation of SDF values onto fine grids.
\changed{%
\autoref{fig:sdf-compliance-comparison} shows that our reconstructed surfaces are consistent with the input SDF, where other methods are not guaranteed to be so,
and \autoref{fig:rfta-prior} shows that while some previous methods \changed{tend to oversmooth reconstructed surfaces}{}, this is not the case for our method, which has no trouble reconstructing objects with both smooth and non-smooth elements.
\autoref{fig:complicated-geometry} shows that even very complicated geometries are reconstructed with a high degree of quality.
Figs.~\ref{fig:reconstruction-comparison-3d}, \ref{fig:reconstruction-comparison-2d} show large-scale qualitative comparisons of our method with the state-of-the-art in global SDF reconstruction, which demonstrate that our method does not sacrifice visual quality for the guarantees that it brings, while 
\autoref{fig:big-dataset-scatterplot} shows a quantitative comparison \changed{(error metrics can be found in \autoref{tab:runtimes-scatterplot})}{runtimes-ref-added-2}.
We observe that our method matches competing methods in numerical accuracy, while providing stronger guarantees of SDF consistency.
}{more-reconstruction-result-writing}

\begin{figure}
\centering
\includegraphics{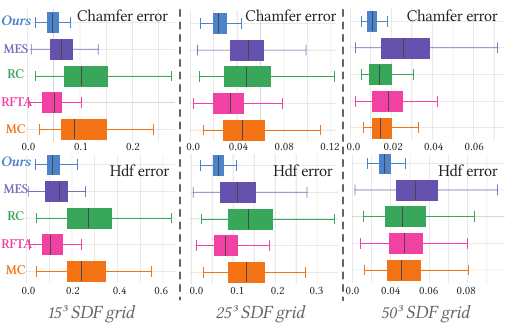}
\vspace{-0.7cm}
\caption{\changed{On 100 random shapes from the TetWild dataset \cite{hu2018tetrahedral} across several resolutions, our algorithm matches the accuracy of the best global SDF reconstruction methods while guaranteeing adherence to the input distance data.
Number of failures: \(15^3\): MC 3, RC 4, MES 13; \(25^3\): MES 4.
Runtimes and errors can be found in \autoref{tab:runtimes-scatterplot}.}{runtimes-ref-added-1}}
\label{fig:big-dataset-scatterplot}
\end{figure}

\begin{figure*}
    \centering
    \includegraphics{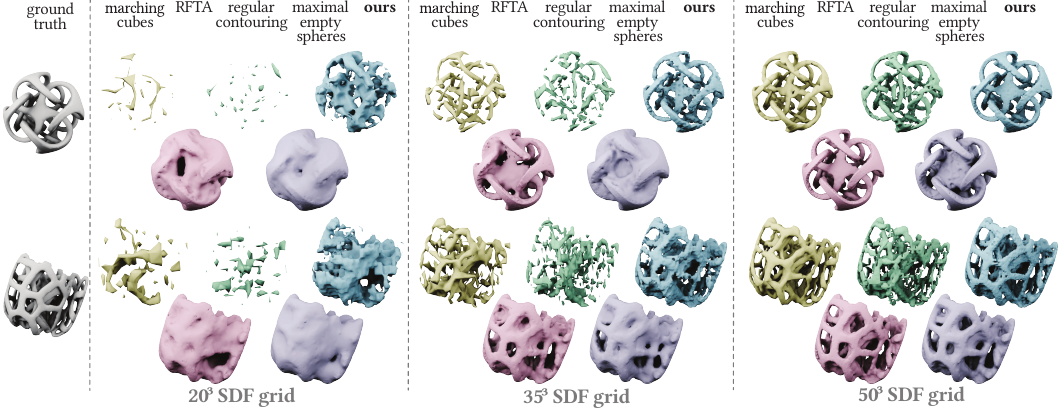}
    \caption{\changed{Our method effectively reconstructs complex geometric structures of arbitrary geometry from SDFs, even where previous work struggles to capture all the fine details.}{newfigure2}}
    \label{fig:complicated-geometry}
\end{figure*}

\changed{Thanks to the performance improvements described in section \autoref{sec:performanceimprovements}, our method's runtime either matches or is slightly longer than that of the most accurate discrete coarse SDF reconstruction methods while, unlike them, guaranteeing the consistence of the generated SDF samples. Average runtimes are presented in \autoref{tab:runtimes-scatterplot}, and runtimes for each individual figure are shown in \autoref{tab:runtimes-2dcomparison} and \autoref{tab:runtimes-3dcomparison}.}{runtimes}

\section{Limitations \& Conclusion}
\label{sec:conclusion}

We have introduced \emph{Greed for the Spheres}, a method for consistent SDF refinement (a task that none of its competitors do well), a method for SDF reconstruction (which produces much higher-detail surfaces than older competitors, and does not suffer as much from oversmoothing or loss of features as recent competitors), and a method for pseudo-SDF repair (which is orders of magnitude faster than its competitor).
Since our method greedily grows, the order in which we subdivide matters.
In future work, we hope to explore methods that can grow all new spheres at the same time, resulting in an order-independent method, and even the ability to explicitly include certain priors as secondary objectives during optimization;
this will also address the fact that our method struggles if there are very few negative SDF values in the input (see \autoref{fig:limitations}), and can sometimes produce surfaces that are very rough (and thus might profit from an explicit, controllable smooth prior).

\changed{
We carried out the walk-on-spheres and sphere tracing tasks to demonstrate that performing pseudo-SDF repair can yield downstream efficiency benefits. However, these schemes could also benefit from our SDF refinement strategy: just as marching cubes applied to a refined SDF yields greater accuracy, refinement should likewise improve the quality of walk-on-spheres and sphere tracing results.
Our method should also be able, with some changes, to work with conservative SDFs whose distance is merely bounded by the true SDF, which is an interesting future research direction.
}{}

Our method is the first to impose SDF validity as a series of hard constraints backed by theoretical guarantees.
As a direct consequence, if the input violates strict SDF validity (for example, because it is noisy), our method can fail to find a valid SDF value for a point.
Interesting future work is the extension of our approach to support noisy input SDFs that contain violations beyond what is allowed by our pseudo-SDF repair method, e.g., opposite-sign sphere intersections.

\changed{Finally, our method relies on the specific mathematical properties of exact Signed Distance Functions to evaluate and propose consistent new distance data. As such, it cannot be directly applied to other classes of implicit representations, like BlobTrees \cite{wyvill1999extending}, convolution surfaces \cite{bloomenthal1991convolution}, R-functions \cite{pasko1995function}, SCALIS models \cite{zanni2013scale} and occupancy fields \cite{mildenhall2021nerf};
or with input that is so noisy that the SDF validity constraints are severely violated (\autoref{fig:noise}). Deriving a general-purpose class of interpolants that can generate implicit function values consistent with arbitrary (potentially spatially mixed) sources of data is a promising avenue for impactful future work.}{refs-conclusion}

\begin{acks}
We thank David Cha for proofreading.
We thank Zoë Marschner, Maximilian Kohlbrenner, Pranav Jain and Dylan Rowe for help with coding.
ChatGPT and Copilot were utilized to generate and debug parts of the code.
We gratefully acknowledge the creators of the 2D and 3D assets used in this article. Figures in this work contain the moon \cite{moon-figure}, walking cat \cite{walking-cat-figure}, star \cite{star-figure}, angkor \cite{angkor-figure}, armbrost \cite{armborst-figure}, cat \cite{cat-silhouette-figure}, dragon \cite{blue-dragon-figure}, fishbone \cite{fish-skeleton-figure}, maple leaf \cite{maple-leaf-figure}, zombie \cite{cartoon-zombie-figure}, polytree \cite{low-poly-tree-sculpture-mesh}, angel candle holder \cite{angel-candle-holder-mesh}, david \cite{david-head-mesh}, hat \cite{chopper-hat-mesh}, fandisk \cite{fandisk-mesh}, spot \cite{spot-mesh}, tower \cite{two-tower-mesh}, cloud \cite{cloud-mesh}, cactus \cite{cactus-mesh}, dragon tower \cite{tower-mesh}, turtle pope \cite{turtle-pope-mesh}, boot \cite{boot-mesh}, cat \cite{cat-mesh}, wingnut \cite{wingnut-mesh}, springer \cite{springer-mesh}, koala \cite{koala-mesh}, pet monster \cite{pet-monster-mesh}, orange \cite{fruit-bunch-mesh}, metratron \cite{metratron-mesh}, lace rose \cite{Lace-Digital-Rose-mesh}, mug \cite{mug-cup-mesh}, statue \cite{snake-statue-mesh}, splatter \cite{splattershot-mesh}, crab \cite{crab-mesh}, dragon skull \cite{dragon-skull-mesh}, lion statue \cite{lion-mesh}, monster \cite{Elephanticus-mesh}, well \cite{well-mesh}.

This research was supported by a gift from Adobe Inc and the National Science Foundation (award \#2335493).
This work was supported by the Natural Sciences and Engineering Research Council of Canada (Grant RGPIN-2021-02524).
The Geometry and the City lab at Columbia University is supported by generous gifts from nTop, Adobe, Dandy and Braid Technologies.
\end{acks}

\begin{figure}
\centering
\includegraphics[width=243.14749pt]{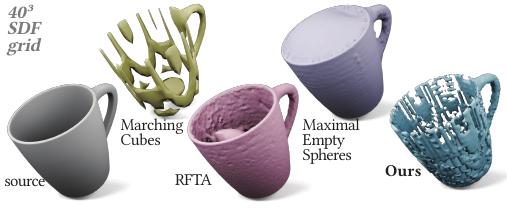}
\vspace{-0.9cm}
\caption{Our method relies on adequate initial interior SDF samples for successful reconstruction, which can cause difficulties for thin structures (where previous methods produce other kinds of artifacts).}
\label{fig:limitations}
\end{figure}

\begin{figure}
    \centering
    \includegraphics[]{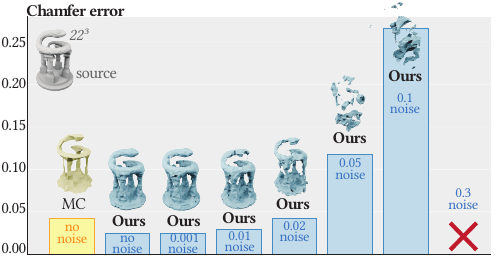}
    \vspace{-0.4cm}
    \caption{\changed{Adding Gaussian noise to the SDF input values,
with increasing standard deviation. Our reconstruction is relatively robust to moderate noise in the input SDF, even outperforming MC (with noise-free input) for up to a deviation of 0.02. However, at large deviations, our algorithm cannot find a valid SDF value, which leads to unreasonable reconstruction results.}{newfigure3}}
    \label{fig:noise}
\end{figure}

\begin{figure*}
\centering
\includegraphics{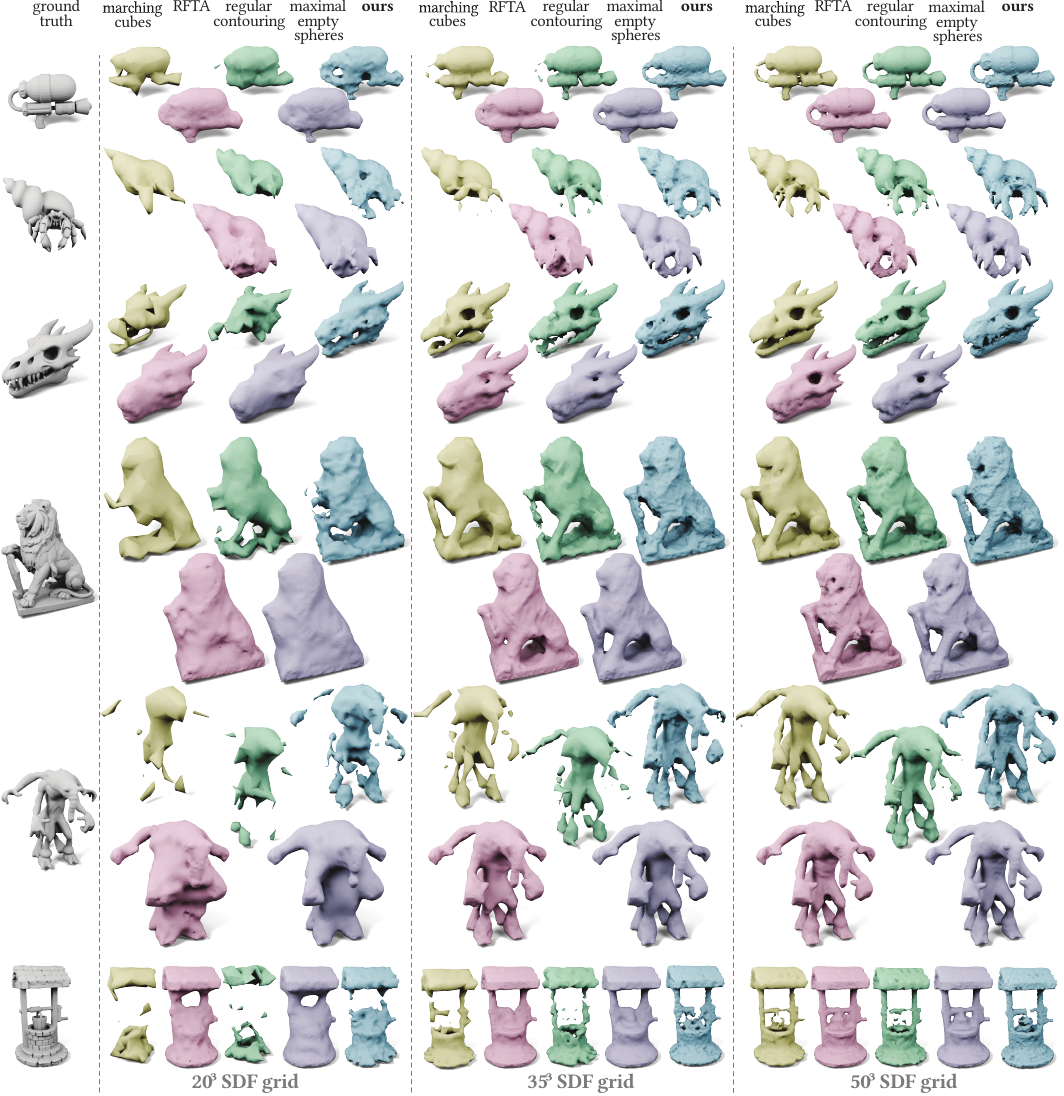}
\vspace{-0.7cm}
\caption{\changed{Additional experiments}{more-experiments} using our method to reconstruct three-dimensional SDFs.
Like other recent methods \cite{rfta,kohlbrenner2025isosurface,kohlbrenner2025polyhedral}, ours yields a detailed surface from low-resolution SDF data, which older local methods \cite{Lorensen1987} cannot do, but our method is even better at extracting intricate details, since it does not contain a strong implicit smoothness prior.}
\label{fig:reconstruction-comparison-3d}
\end{figure*}

\begin{figure*}
\centering
\includegraphics{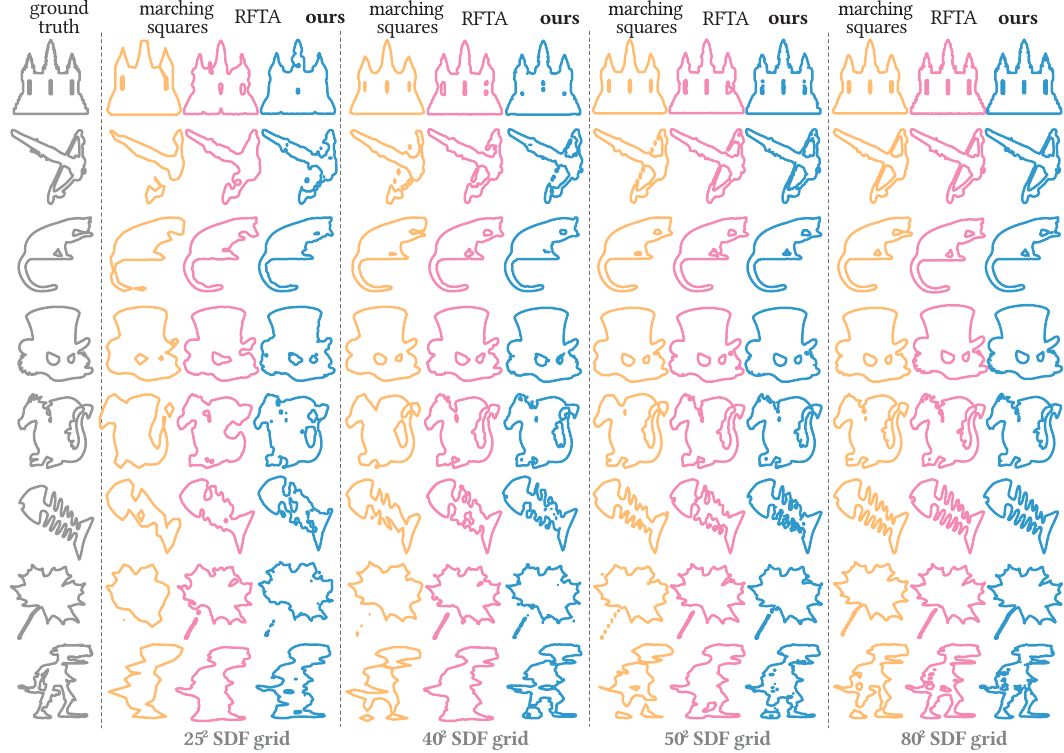}
\caption{\vspace{-2pt}
\changed{We can use our method to reconstruct two-dimensional SDFs with high fidelity.
We use global information, achieving the improved quality of other sphere-based methods \cite{rfta} over local methods \cite{Lorensen1987},
and recover even intricate topological features.}{twodresultsmoved}}
\label{fig:reconstruction-comparison-2d}
\end{figure*}

\begin{table*}
\centering
\resizebox{\textwidth}{!}{%
\begin{tabular}{l|l|l|l|l|l|l|l|l|l|l|l|l|l|l|l}
\rowcolor[HTML]{C6DBEF}Grid & Chf MC & Chf RFTA & Chf RC & Chf MES & Chf ours & Hdf MC & Hdf RFTA & Hdf RC & Hdf MES & Hdf ours & Time MC & Time RFTA & Time RC & Time MES & Time ours
  \\
 $15^3$  & 0.1181         & \textbf{0.0493}      & 0.1232       & 0.0707 & 0.0495 & 0.2771 & \textbf{0.1162} & 0.2894 & 0.1462 & 0.1225 & 0.0006 & 3.8685 & 0.0558 & 1.5023 & 3.2475\\
 $25^3$ & 0.0525 & 0.0339 & 0.0553 & 0.0505 & \textbf{0.0230} & 0.1443 & 0.0889 & 0.1512 & 0.1179 & \textbf{0.0663} & 0.0022 & 21.5992 & 0.1246 & 5.3893 & 17.5931 \\
 $50^3$ & 0.0171 & 0.0196 & 0.0166 & 0.02960 & \textbf{0.0109} & 0.0607 & 0.0563 & 0.0587 & 0.0756 & \textbf{0.0348} & 0.0120 & 242.6865 & 0.6263 & 40.1474 & 367.4088
 
\end{tabular}}
\caption{\changed{%
\vspace{-1pt}
Average errors and runtimes (in seconds) for the experiment in \autoref{fig:big-dataset-scatterplot}.
We report both Chamfer (Chf) and approximated Hausdorff (Hdf) distances to the ground truth for the methods of \citet{Lorensen1987} (MC), \citet{rfta} (RFTA), \citet{kohlbrenner2025isosurface} (RC), \citet{kohlbrenner2025polyhedral} (MES) and ours.
Our method outperforms every other method at medium and high resolutions, and is only narrowly bested by RFTA at low resolutions.
We used the parameters $\kappa = 6 \cdot n^{\frac{1}{3}}, \subdivisiondepth = 2$.
}{runtimesplot-moved}
\label{tab:runtimes-scatterplot}}
\end{table*}

\bibliographystyle{ACM-Reference-Format}
\bibliography{bibliography}

\appendix
\renewcommand{\sectionautorefname}{Supp.}
\renewcommand{\subsectionautorefname}{Supp.}
\renewcommand{\subsubsectionautorefname}{Supp.}
\section*{Appendix}

\section{Timing}
\label{app:timing}

Here we provide errors and runtime statistics for the experiments in the article.
\changed{\autoref{tab:singlerefinement} contains the average runtime of a single SDF evaluation for all refinement experiments.}{new-table}
\autoref{tab:runtimes-2dcomparison} contains the statistics for the experiments in \autoref{fig:reconstruction-comparison-2d}, and \autoref{tab:runtimes-3dcomparison} contains the statistics for the experiments in \autoref{fig:reconstruction-comparison-3d}.
\autoref{tab:allfigures} covers most remaining figures.

\section{Further implementation details}
\label{app:furtherimplementation}
We assume that \(\Omega \subseteq [-1,1]^d\), and only consider grids, sphere intersections and uncovered points in \([-1,1]^d\).
We perform all intersection and containment checks in double precision arithmetic.
Our default tolerance for determining uniqueness of points is \(10^{-9}\); all other tolerances are \(10^{-6}\).
Our code is written in Python and C++ using Gpytoolbox \cite{gpytoolbox}, libigl \cite{libigl}, nanoflann \cite{nanoflann} and wgpu \cite{wgpu}.
\changed{We run all code on an unbinned Macbook Pro M1 with 16GB RAM, and a Desktop with an Intel i7-13700K with 64GB RAM.}{more-hardware-info}\\
\changed{We trained DeepSDF \cite{DeepSDF_CVPR} only on SDF samples on a \(15^3\) grid without near-surface sampling in \autoref{fig:nsdf-violation}.}{deepsdf-implication-info}

\section{Parameters}
\label{app:parameters}
By default, we run our algorithm with $\kappa = 4 \cdot n^{\frac{1}{2}}$ for $(d=2)$ and $\kappa = 8 \cdot n^{\frac{1}{3}}$ for $(d = 3)$; and subdivision depth $\subdivisiondepth = 2$ if $(d = 3 $ and $ n> 20^3 )$ else $\subdivisiondepth=3$.
The parameters used in experiments can be found in \autoref{tab:allfigures} and the caption of \autoref{tab:runtimes-scatterplot}.

\begin{table}
    \centering
    \begin{tabular}{l|l|l}
    \rowcolor[HTML]{C6DBEF} Shape & Grid & Avg. single SDF evaluation time\\
    \autoref{fig:teaser} , left & $12^3$ & 0.0311 \\

    \autoref{fig:refinement}, top & $14^3$ & 0.0228\\

    \autoref{fig:refinement}, middle & $14^3$ &0.0134\\

    \autoref{fig:refinement}, bottom & $14^3$ &0.0120
    
    \end{tabular}
    \caption{\changed{Average runtime for a single SDF evaluation for all of the refinement experiments.}{new-table}}
    \label{tab:singlerefinement}
\end{table}

\begin{table}
    \centering
    \begin{tabular}{l|l|l|l|l|l|l}
    \rowcolor[HTML]{C6DBEF} Shape & Grid & $\subdivisiondepth$ & $\kappa$ & Chf & Hdf & Runtime\\
      \autoref{fig:teaser}, left  & $12^3$ & 1 & \(\infty\) & - & - & 27.2191  \\
      \autoref{fig:teaser}, middle  & $25^3$ & 2 & 500 & 0.0252& 0.0599& 61.5978 \\
      \autoref{fig:teaser}, middle & $40^3$ & 2 & 500 & 0.0146&0.0364 & 251.2009  \\
      \autoref{fig:teaser}, right & $45^3$ & - & - & - & - & 34326.5295\\
      \autoref{fig:redistancing}, bottom & $20^2$ & 2 & \(\infty\) & 0.0297 &0.0862 & 0.1539\\
      \autoref{fig:radius-scores}, second & $8^2$ & 2 & \(\infty\) & - & - & 0.1227 \\
      \autoref{fig:radius-scores}, third & $8^2$ & 2 & \(\infty\) & - & - & 0.0158\\
      \autoref{fig:radius-scores}, fourth & $8^2$ & 2 & \(\infty\) & - & - & 0.1601\\
      \autoref{fig:subdivision-orders}, middle & $18^2$ & 1 & 100 & - & - & 0.1297\\
      \autoref{fig:subdivision-orders}, right & $18^2$ & 1 & 100 & - &-& 0.0288\\
      \autoref{fig:subdivision-levels}, second & $25^3$ & 0 & 200 &
      0.0216 & 0.0854 & 2.4198 \\
      \autoref{fig:subdivision-levels}, third & $25^3$ & 1 & 200 &
      0.0164 & 0.0837 & 6.2958\\
      \autoref{fig:subdivision-levels}, fourth & $25^3$ & 2 & 200 & 
      0.0153 & 0.0775 & 17.8579\\
      \autoref{fig:subdivision-levels}, fifth & $25^3$ & 3 & 200 &
       0.0150 & 0.0766 &117.0881\\
       \autoref{fig:subdivision-levels}, sixth & $25^3$ & 4 & 200 &
       0.0149 & 0.0669 & 1884.5306\\
      \autoref{fig:naive-vs-smart-upsampling}, left & $15^2$ & 1 & $\infty$ &- &- &0.0118\\
      \autoref{fig:naive-vs-smart-upsampling}, middle & $15^2$ & 1 & $\infty$ &0.0151 & 0.0319 &0.0192\\
      \autoref{fig:naive-vs-smart-upsampling}, right & $15^2$ & 1 & $\infty$ &0.0151 &0.0319 &0.0187\\
       \autoref{fig:refinement}, top & $14^3$ & 2 & 800 & - & - & 60.9490\\
       \autoref{fig:refinement}, middle & $14^3$ & 2 & 800 & - & - &55.3167\\
       \autoref{fig:refinement}, bottom & $14^3$ & 2 & 800 & - & - & 58.2502\\
       \autoref{fig:nsdf-violation}, middle & $15^3$ & 2 & \(\infty\) & - & - & 169.3058\\
       \autoref{fig:pseudo-fmm} & $45^2$ & - & - & - & - & 0.0748\\
       \autoref{fig:nsdf-didactic} & $6^2$ & - & - &- & - & 0.0177\\
       \autoref{fig:offset-3d}, top & $45^3$ & - & - & - & - & 15531.5520\\
       \autoref{fig:offset-3d}, middle & $45^3$ & - & - & - & - & 18959.3387\\
       \autoref{fig:offset-3d}, bottom & $45^3$ & - & - & - & - & 41900.3524\\
       \autoref{fig:comparison-with-mc-interpolation}, second & $25^3$ & 2  & 500 &0.0213 & 0.0909 &40.8223\\
      \autoref{fig:sdf-compliance-comparison}, fourth & $22^2$ & 4 & $\infty$ & 0.0285 & 0.0724 &0.3786 \\
      \autoref{fig:rfta-prior}, middle ours & $35^3$ & 2 & 280 & 0.0136 & 0.0434 & 43.6214\\
      \autoref{fig:rfta-prior}, right ours & $50^3$ & 2 & 400 & 0.0073 & 0.0295 & 214.8491\\
      \autoref{fig:limitations} & $40^3$ & 2 & 400 &0.0156 & 0.0349 &205.7701  \\
      \autoref{fig:noise}, second  & $22^3$ & 2 & 176 & 0.0243 &  0.0595 & 16.2284\\
      \autoref{fig:noise}, third & $22^3$ & 2 & 176 & 0.0252 & 0.0807 & 15.6326\\
      \autoref{fig:noise}, fourth & $22^3$ & 2 & 176 &  0.0309 & 0.0763 & 15.1133\\
      \autoref{fig:noise}, fifth & $22^3$ & 2 & 176 & 0.0420 & 0.0892 & 13.2991\\
      \autoref{fig:noise}, sixth & $22^3$& 2 & 176 & 0.1144 & 0.2306 & 26.3825\\
      \autoref{fig:noise}, seventh & $22^3$ & 2 & 176 & 0.2627 & 0.7076 & 133.1107
      \end{tabular}
    \caption{Parameters, Chamfer errors (Chf), Hausdorff errors (Hdf), and runtime (in seconds) for a variety of experiments in our article.}
    \label{tab:allfigures}
\end{table}

\begin{table*}
\centering
\resizebox{\textwidth}{!}{%
\begin{tabular}{l|l|l l l l l|l l l l l|l}
Shape & Grid & Chf MC & Chf RFTA & Chf RC & Chf MES & Chf ours & Hdf MC & Hdf RFTA & Hdf RC & Hdf MES & Hdf ours &  Time ours
  \\
  \hline
metratron & $20^3$ &  0.1224 & 0.0664 & 0.1236 & 0.0985 & \textbf{0.0553} & 0.2432 & 0.1402 & 0.2477 & 0.1508 & \textbf{0.1011} & 16.2775\\
\hline
metratron & $35^3$ & 0.0326 & 0.0332 & 0.0358 & 0.0654 & \textbf{0.0153} & 0.0960 & 0.0883 & 0.0913 & 0.0995 & \textbf{0.0480} & 100.4788\\
\hline
metratron & $50^3$ & 0.0148 & 0.0099 & 0.0178 & 0.0441 & \textbf{0.0068} & 0.0336 & 0.0395 & 0.0353 & 0.0855 & \textbf{0.0235} & 690.389\\
\hline
rose & $20^3$ & 0.0943 & 0.0793 & 0.0971 & 0.1117 & \textbf{0.0507} & 0.2587 & 0.1662 & 0.2537 & 0.2131 & \textbf{0.1193} & 20.7207\\
\hline
rose & $35^3$ & 0.0335 & 0.0378 & 0.0361 & 0.0609 & \textbf{0.0230} & 0.1203 & 0.0783 & 0.1321 & 0.1211 & \textbf{0.0516} & 123.0539\\
\hline
rose & $50^3$ & 0.0159 & 0.0226 & 0.0191 & 0.0460 & \textbf{0.0128} & 0.0457 & 0.0574 & 0.0530 & 0.0857 & \textbf{0.0411} & 712.1766\\
\hline
splatter & $20^3$ & 0.0340 & 0.0430 & 0.0406 & 0.0470 & \textbf{0.0291} & 0.1286 & 0.1116 & 0.1463 & 0.1185 & \textbf{0.0952} & 6.9883\\
\hline
splatter & $35^3$ & 0.0182 & 0.0282 & 0.0170 & 0.0295 & \textbf{0.0161} & 0.05625& 0.0935 & 0.0658 & 0.0937 & \textbf{0.0549} & 50.1380\\
\hline
splatter & $50^3$ & 0.0111 & 0.0202 & \textbf{0.0101} & 0.0191 & 0.0111 & \textbf{0.0498} & 0.0825 & 0.0507 & 0.0779 & 0.0555 & 279.9250\\
\hline
crab & $20^3$ & 0.0474 & 0.0618 & 0.0665 & 0.0625 & \textbf{0.0364} & 0.2190 & 0.1789 & 0.2817 & 0.1923 & \textbf{0.1602} & 8.3974 \\
\hline
crab & $35^3$ & 0.0322 & 0.0315 & 0.0305 & 0.0365 & \textbf{0.0234} & 0.2027 & \textbf{0.0799} & 0.1823 & 0.1138 & 0.1377 & 67.6664\\
\hline
crab & $50^3$ & 0.0243 & 0.0278 & 0.0260 & 0.0272 & \textbf{0.0156} & 0.1835 & 0.1004 & 0.1774 & 0.0914 & \textbf{0.0607} & 410.7270\\
\hline
dragon & $20^3$ & 0.0526 & 0.0732 & 0.0698 & 0.0819 & \textbf{0.0388} & 0.1578 & 0.1830 & 0.2108 & 0.1906 & \textbf{0.0969} & 10.1900\\
\hline
dragon & $35^3$ & 0.0239 & 0.0384 & 0.0244 & 0.0437 & \textbf{0.01876} & 0.0969 & 0.1092 & 0.1125 & 0.1091 & 0.0522 & 69.0681\\
\hline
dragon & $50^3$ & 0.0135 & 0.0303 & 0.0128 & 0.0352 & \textbf{0.0114} & 0.0662 & 0.1021 & 0.0528 & 0.0938 & \textbf{0.0377} & 453.166\\
\hline
lion & $20^3$ & 0.0449 & 0.0591 & 0.0434 & 0.0706 & \textbf{0.0335} & 0.1614 & 0.1693 & 0.1114 & 0.1888 & \textbf{0.1082} & 11.1054\\
\hline
lion & $35^3$ & \textbf{0.0171} & 0.0290 & 0.0181 & 0.0338 & 0.0174 & 0.0577 & 0.0946 & \textbf{0.0564} & 0.0956 & 0.0683 & 77.8197\\
\hline
lion & $50^3$ & \textbf{0.0108} & 0.0159 & 0.0101 & 0.0233 & 0.0111 & \textbf{0.0343} & 0.0742 & 0.0380 & 0.0885 & 0.0448 & 533.4602\\
\hline
monster & $20^3$ & 0.0945 & 0.0708 & 0.0808 & 0.0824 & \textbf{0.0379} & 0.3055 & 0.1560 & 0.2326 & 0.1549 & \textbf{0.1082} & 9.8355\\
\hline
monster & $35^3$ & 0.0290 & 0.0265 & 0.0279 & 0.0336 & \textbf{0.0154} & 0.1089 & 0.0721 & 0.1001 & 0.0985 & \textbf{0.0589} & 58.8823\\
\hline
monster & $50^3$ & 0.0150 & 0.0153 & 0.0143 & 0.0192 & \textbf{0.0085} & 0.0905 & 0.0496 & 0.0753 & 0.0542 & \textbf{0.0363} & 387.6927\\
\hline
well & $20^3$ & 0.05488 & 0.0633 & 0.0634 & 0.0911 & \textbf{0.0347} & 0.1670 & 0.1500 & 0.1726 & 0.1728 & \textbf{0.0832} & 10.5913\\
\hline
well & $35^3$ & 0.0230 & 0.0312 & 0.0249 & 0.0407 & \textbf{0.0198} & 0.0866 & 0.0809 & 0.1012 & 0.0898 & \textbf{0.0517} & 72.5144\\
\hline
well & $50^3$ & 0.0140 & 0.0209 & 0.01377 & 0.0308 & \textbf{0.0132} & 0.0490 & 0.0602 & 0.04655 & 0.0794 & \textbf{0.0447} & 503.4369\\
\end{tabular}}

\caption{\changed{Errors and runtimes (in seconds) for the experiments in \autoref{fig:complicated-geometry} and \autoref{fig:reconstruction-comparison-3d}.}{newexperiments}
We report both Chamfer (Chf) and approximated Hausdorff (Hdf) distances to the ground truth for the methods of \citet{Lorensen1987} (MC), \citet{rfta} (RFTA), \citet{kohlbrenner2025isosurface} (RC), and \citet{kohlbrenner2025polyhedral} (MES) along with ours.
The experiment was run with default parameters.}
\label{tab:runtimes-3dcomparison}
\vspace{-20pt}
\end{table*}

\begin{table*}
\centering
\resizebox{\textwidth}{!}{%
\begin{tabular}{l|l|l l l| l l l| l| l}
Shape & Grid & Chf MS & Chf RFTA & Chf GFTS &Hdf MS & Hdf RFTA & Hdf GFTS& Time RFTA & Time GFTS\\
\hline
angkor & $25^2$ & 0.0459 & \textbf{0.0364} & 0.0487 & 0.1843 & \textbf{0.1024} & 0.1279 & 0.2375 & 0.0571\\
\hline
angkor & $40^2$ & 0.0139 & \textbf{0.0116} & 0.0221 & 0.0365 & \textbf{0.0283} & 0.0948 & 0.4584 & 
0.1802 \\
\hline
angkor & $50^2$ & 0.0108 & 0.0104 & \textbf{0.0091} & 0.0245 & 0.0341 & \textbf{0.0233} & 0.6938 & 
0.2952 \\
\hline
angkor & $80^2$ & 0.0082 & \textbf{0.0055} & 0.0065 & 0.0209 & \textbf{0.0164} & 0.0231 & 3.0581 & 
1.5495 \\
\hline
armbrust & $25^2$ & 0.0342 & 0.0361 & \textbf{0.0249} & 0.0963 & 0.1112 & \textbf{0.0541} & 0.2351 & 0.0539\\
\hline
armbrust & $40^2$ & 0.0196 & 0.0185 & \textbf{0.0142} & 0.0700 & 0.0589 & \textbf{0.0378} & 0.4183 & 0.1729\\
\hline
armbrust & $50^2$ & 0.0121 & 0.0115 & \textbf{0.0096} & 0.0360 & 0.0345 & \textbf{0.0302} & 0.6004 & 0.3141\\
\hline
armbrust & $80^2$ & 0.0080 & 0.0061 & \textbf{0.0051} & 0.0365 & 0.0303 & \textbf{0.0236} & 2.4443 & 1.2559\\
\hline
cat & $25^2$ & 0.0315 & 0.0206 & \textbf{0.0201} & 0.0729 & \textbf{0.0609} & 0.0714 & 0.2272 & 0.0563\\
\hline
cat & $40^2$ & 0.0141 & \textbf{0.0103} & 0.0154 & 0.0574 & \textbf{0.0417} & 0.0521 & 0.3816 & 0.1679\\
\hline
cat & $50^2$ & 0.0111 & \textbf{0.0084} & \textbf{0.0084} & 0.0484 & 0.0388 & \textbf{0.0382} & 0.5992 & 0.3076\\
\hline
cat & $80^2$ & 0.0043 & 0.0056 & \textbf{0.0032} & 0.0217 & 0.0326 & \textbf{0.01615} & 2.4202 & 1.4597\\
\hline
duck & $25^2$ & 0.0238 & 0.0257 & \textbf{0.0173} & 0.0476 & 0.0887 & \textbf{0.0520} & 0.2399 & 0.0529\\
\hline
duck & $40^2$ & 0.0123 & \textbf{0.0099} & 0.0011 & 0.0311 & \textbf{0.0290} & 0.0341 & 0.3850 & 0.1380\\
\hline
duck & $50^2$ &  0.0091 & \textbf{0.0082} & 0.0094 & 0.0322 & \textbf{0.0298} & 0.0478 & 0.6318 & 0.2541\\
\hline
duck & $80^2$ & 0.0067 & \textbf{0.0038} & 0.0049 & 0.0286 & \textbf{0.0162} & 0.0203 & 2.3076 & 1.0027\\
\hline
dragon & $25^2$ & 0.0440 & 0.0369 & \textbf{0.0277} & 0.1682 & 0.1005 & \textbf{0.0589} & 0.2828 & 0.0812\\
\hline
dragon & $40^2$ & 0.0238 & 0.0162 & \textbf{0.0153} & 0.1028 & \textbf{0.0514} & 0.0554 & 0.3917 & 0.1754\\
\hline
dragon & $50^2$ & 0.0151 & 0.0143 & \textbf{0.0119} & 0.0491 & 0.0511 & \textbf{0.0397} & 0.6811 & 0.2855\\
\hline
dragon & $80^2$ & 0.0092 & \textbf{0.0079} & 0.0086 & 0.0481 & 0.0440 & \textbf{0.0402} & 2.7149 & 1.3041\\
\hline
fishbone & $25^2$ & 0.0398 & 0.0544 & \textbf{0.0346} & 0.0760 & 0.1098 & \textbf{0.0076} & 0.2652 & 0.0621\\
\hline
fishbone & $40^2$ & 0.0230 & 0.0265 & \textbf{0.0210} & 0.0542 & 0.0742 & \textbf{0.0531} & 0.4593 & 0.1991\\
\hline
fishbone & $50^2$ & 0.0185 & 0.0233 & \textbf{0.0155} & 0.0586 & 0.0657 & \textbf{0.0466} & 0.6179 & 0.3245\\
\hline 
fishbone & $80^2$ & 0.0060 & 0.0035 & \textbf{0.0033} & 0.0206 & 0.0139 & \textbf{0.0131} & 2.8065 & 1.5737\\
\hline
maple & $25^2$ & 0.0914 & \textbf{0.0246} & 0.0355 & 0.3431 & \textbf{0.0555} & 0.1208 & 0.2301 & 0.0592\\
\hline
maple & $40^2$ & 0.0284 & \textbf{0.0151} & 0.0155 & 0.0892 & 0.0616 & \textbf{0.529} & 0.4112 & 0.1404\\
\hline
maple & $50^2$ & 0.0163 & \textbf{0.0109} & 0.0123 & 0.0569 & \textbf{0.0547} & 0.0548 & 0.6680 & 0.2739\\
\hline
maple & $80^2$ &  0.0081 & 0.0040 & \textbf{0.0028} & 0.0396 & 0.0286 & \textbf{0.0097} & 3.5084 & 1.3598\\
\hline
zombie & $25^2$ & 0.0386 & 0.0355 & \textbf{0.0284} & 0.0716 & 0.0718 & \textbf{0.0635} & 0.2370 & 0.0561\\
\hline
zombie & $40^2$ & 0.0238 & 0.0324 & \textbf{0.0226} & \textbf{0.0457} & 0.0831 & 0.0488 & 0.3729 & 0.1663\\
\hline
zombie & $50^2$ & 0.0215 & 0.0312 & \textbf{0.0204} & \textbf{0.0451} & 0.0668 & 0.0477& 0.6433 & 0.3317\\
\hline
zombie & $80^2$ & 0.0097 & \textbf{0.0081} & \textbf{0.0081} & 0.0371 & 0.0262 & \textbf{0.0260} & 2.7975 & 1.5748\\
\end{tabular}}
\caption{Errors and runtimes (in seconds) for the experiments in \autoref{fig:reconstruction-comparison-2d}.
We report both Chamfer (Chf) and approximated Hausdorff (Hdf) distances to the ground truth for the methods of \citet{Lorensen1987} (MS), \citet{rfta} (RFTA), and ours.
The experiment was run with default parameters.}
\label{tab:runtimes-2dcomparison}
\vspace{-20pt}
\end{table*}

\end{document}